\documentclass[letterpaper, 10 pt, conference]{ieeeconf}

\makeatletter

\let\proof\@undefined
\let\endproof\@undefined
\makeatother

\usepackage{subfiles}
\usepackage{amssymb,amsmath}
\usepackage{amsthm}
\usepackage{comment}
\usepackage{graphicx}
\usepackage{pgfplots}
\pgfplotsset{compat=1.17}
\usepackage[ruled,vlined,linesnumbered]{algorithm2e}
\usepackage{float}

\setlength{\topsep}{3pt}
\newtheorem{theorem}{Theorem}[section]
\newtheorem{proposition}[theorem]{Proposition}
\newtheorem{corollary}[theorem]{Corollary}
\newtheorem{lemma}[theorem]{Lemma}

\theoremstyle{definition}
\newtheorem{definition}[theorem]{Definition}

\theoremstyle{remark}
\newtheorem{remark}[theorem]{Remark}

\theoremstyle{definition}
\newtheorem{exmp}[theorem]{Example}


\newcommand\subscr[2]{#1_{\textup{#2}}}
\newcommand\upscr[2]{#1^{\textup{#2}}}

\newcommand{\alphaopt}[1]{\upscr{\alpha_{#1}}{opt}}

\newcommand{\alphapes}[1]{\upscr{\alpha_{#1}}{pes}}
\newcommand{\alphaoptk}[1]{\upscr{\alpha_{#1}}{opt,k}}


\DeclareMathOperator*{\argmax}{arg\,max}

\def\real{\mathbb{R}}

\title{\LARGE \bf
Submodular Maximization with Limited Function Access
}

\author{Andrew Downie, Bahman Gharesifard, and Stephen L. Smith
\thanks{Andrew Downie and Stephen L. Smith are with are with the Electrical and Computer Engineering at the University of Waterloo, Waterloo, ON, Canada
        {\tt\small \{adownie,stephen.smith\}@uwaterloo.ca}}%
\thanks{Bahman Gharesifard is with the Department of Electrical and Computer Engineering at the University of California, Los Angeles        {\tt\small gharesifard@ucla.edu}}%
}

\begin{document}
\maketitle


\begin{abstract} 
We consider a class of submodular maximization problems in which decision-makers have limited access to the objective function. We explore scenarios where the decision-maker can observe only pairwise information, i.e., can evaluate the objective function on sets of size two. We begin with a negative result that no algorithm using only $k$-wise information can guarantee performance better than $k/n$. We present two algorithms that utilize only pairwise information about the function and characterize their performance relative to the optimal, which depends on the curvature of the submodular function. Additionally, if the submodular function possess a property called supermodularity of conditioning, then we can provide a method to bound the performance based purely on pairwise information. The proposed algorithms offer significant computational speedups over a traditional greedy strategy. A by-product of our study is the introduction of two new notions of curvature, the $k$-Marginal Curvature and the $k$-Cardinality Curvature. Finally, we present experiments highlighting the performance of our proposed algorithms in terms of approximation and time complexity.
\end{abstract}

\section{Introduction}

Submodular maximization has recently generated interest in many decision-making problems, as it can provide strong performance guarantees for computationally difficult problems. Submodular functions are set functions that exhibit the property of diminishing returns. Submodular optimization is a well-studied subject, as these functions model many real-world problems in controls~\cite{submodularvoltagecontrol,Submodular_Power_Storage}, robotics \cite{UAV_Trajectory,SubmodularInformationGathering}, data processing \cite{DataSetSelection,MutliDocumentSummarization} and machine learning~\cite{ProbabilisticSubmodularLinearTime,EntropyGreedy} 

One practical difficulty in implementing algorithms for submodular maximization in complex settings is that the required function evaluations are computationally expensive. This can be attributed to the large-scale characteristics of the system~\cite{SubmodularStreaming}, application-specific constraints such as communication constraints~\cite{distsubmaxpartition}, or the type of data the objective function is evaluating~\cite{SensorPlacementWaterNetworks}. In its most common form, the submodular function is treated as a \textit{value oracle}, which is repeatedly queried by a greedy strategy to maximize the objective function. Therefore, it is inherently assumed that one can evaluate the function for sets of any size. In practice, however, it may only be possible to evaluate the functions on smaller set sizes due to computation cost or limitations imposed. Consider the setting where a company is selecting locations for several new retail stores. The total revenue received by a set of store locations can be modelled as a submodular function: As more stores are added, the marginal benefit of adding a new store is reduced. In the classical greedy algorithm for submodular maximization, we assume we have access to a value oracle to evaluate subsets of store locations. Armed with this oracle, we iteratively add a new store $s_k$ to the existing set $\{s_1,\dots,s_{k-1}\}$ by selecting the location $s_k$ that maximizes the marginal benefit $f(s_1,\dots,s_{k-1},s) - f(s_1,\dots,s_{k-1})$. To evaluate this quantity, the oracle must accurately model the revenue of $k$ stores, which can be challenging in practice due to their complex interactions: for example, $s_k$ may reduce the revenue at some $s_i$, which then may affect some other store's revenue. 

Motivated by the lack of access to the full value oracle in practical settings, in this paper, we seek to determine how well we can approximate the maximum value of a submodular function when we have access to a limited set of function values. For most parts of this paper, we focus on the case where we can access function values for single elements $f(s_i)$ and for pairs of elements $f(s_i,s_j)$. We refer to this as \emph{pairwise information}. In the motivating example, this corresponds to knowing the total revenue for a single store and the total revenue for any two stores together and nothing more. Note that this restriction on information is severe. A submodular function on a base set of $N$ elements can be represented as a look-up table with $2^N$ values. If only singleton and pairwise information are available, this means we have access to only $N(N+1)/2$ values. While we focus on pairwise information, we also extend most results to the case of $k$-wise information, where we can evaluate any set of size at most $k$.

\textit{Statement of Contributions:}
We consider the submodular maximization problem where information on the underlying function is limited, in that we only have access to evaluations of sets of size at most $k$. Let $X$ be the base set of elements we are optimizing over and $n$ be the maximum number of elements that can be in our solution set. We begin with a negative result; namely, there exists a class submodular functions for which no algorithm subject to this information constraint can guarantee performance better than $k/n$ of optimal. In light of this, we propose a class of functions where we can upper and lower bound the marginal gains the objective function in terms of pairwise information. Using these bounds, we propose two simple greedy algorithms that utilize only pairwise information. We introduce two new notions of curvature named the $k$-Marginal Curvature and the $k$-Cardinality Curvature, which capture ``how submodular'' a function is. We then adapt a previous result for approximate value oracles to prove performance bounds for the two algorithms in terms of our new notions of curvature. The two notions provide a new way to understand submodular functions and may be of independent interest. We also show that, using only pairwise information and an additional assumption called supermodularity of conditioning on the function, we can compute optimality bounds for a given solution. We illustrate that the structure of the lower and upper bound estimates of the marginal gains can be exploited to produce an algorithm that runs in exactly $\mathcal{O}(|X|\cdot n)$ without the assumption of a value oracle. Finally, we show experimental results for an autonomous ride service coverage problem that highlights the effectiveness of the algorithms and the time complexity advantages.

\textit{Related Work:}
It is well known that maximizing a submodular function subject to a cardinality constraint is NP-hard~\cite{Nemhauser}, but if the function is normalized and monotone, then a greedy algorithm provides an approximation factor of $(1-1/e)$. This paper considers a similar problem but with additional information constraints. Some important general submodular functions where the greedy algorithm has been extensively explored are graph cut~\cite{MutliDocumentSummarization}, mutual information~\cite{SubmodularSurvey}, set cover~\cite{SubmodularSetCover} and facility location~\cite{SubmodularSurvey}. Other traditional constraints that have been considered for submodular maximization include knapsack \cite{SubmodularKnapsackConstraints}, budget \cite{MutliDocumentSummarization} and matroid constraints \cite{MatroidConstriant,Robust_correlated}. Submodular maximization has also been studied in the context of robotics and controls, being used in applications such as sensor coverage~\cite{SubmodularSurvey}, sensor selection for Kalman filtering~\cite{weak_submodular,Kalman_sensor,submod_sensor_scheduling}, multi-robot exploration objectives~\cite{multi_robot}, voltage control~\cite{submodularvoltagecontrol}, multi-agent target tracking~\cite{resilient_target_tracking}, informative path planning~\cite{inform_path_planning}, and control input selection~\cite{input_selection}.

Recently, other information constraints are being considered in the context of distributed submodular maximization. In these scenarios, a team of agents are attempting to maximize a submodular objective function collaboratively. Each agent has access to their own set of actions and can observe a limited number of decisions made by other agents~\cite{LimitedInformation,distsubmaxpartition,ImpactOfInfromation,MessagePassing,ParrellelExecution}. In contrast, we consider the case where each decision-maker has limited access to the function $f$ itself. 

Another related concept is the idea of an \textit{approximate value oracle}~\cite{revisiting}. This refers to a black box that takes as input a set (or set and a new element) and outputs an approximation to the true function value (or an approximation to the marginal gains). Results can then be derived on the quality of the resulting solutions for a greedy algorithm using the approximate value oracle. The performance of these algorithms is a function of the approximation factor for the approximate value oracle. In this paper, we extend these approximation oracle results to provide approximation bounds in terms of new notions of curvature.

Finally, another aspect of submodular maximization is the computational efficiency of the greedy strategy~\cite{lazierthanlazy,SubmodularStreaming,ScalingViaPrunedSubmodularityGraphs,DistributedSubmodularMaximization}. Under the assumption that the strategy has access to a value oracle for the objective that can be computed in constant time, the time complexity is $\mathcal{O}(|X|\cdot n)$~\cite{lazierthanlazy}. Even though the time complexity is polynomial, for $X$ with large cardinality, the greedy strategy can become prohibitively expensive to execute. Consequently, alternative implementations have been provided for the greedy strategy that improve computational efficiency by leveraging streaming techniques to only pass over the set $X$ once~\cite{SubmodularStreaming,StreamingPlusPlus}, or by realizing the set $X$ as a tree and pruning nodes and edges to reduce the size of $X$~\cite{ScalingViaPrunedSubmodularityGraphs}. Parallelized implementations of the greedy strategy and strategies that greedily maximize over randomly sampled subsets of $X$~\cite{lazierthanlazy} have addressed computational issues for large problems. Most of these techniques focus on reducing the search space being optimized over but do not address the cost of computing the objective function. 

\section{Problem Definition and Inapproximability}
Let $X$ be a set of elements and $2^X$ be the power set of those elements. A set function $f:2^X\rightarrow \real_{\geq0}$ is submodular if the following property of diminishing returns holds: For all $A \subseteq B \subseteq X$ and $x\in X\backslash B$ we have $$f(A\cup\{x\})- f(A) \geq f(B\cup\{x\})- f(B).$$
We refer to $f(A\cup\{x\})- f(A)$ as the marginal return of $x$ given $A$, denoted by $f(x|A)$.  For simplicity, we denote the objective value of a singleton $f(\{x\})$ by $f(x)$. We also denote marginal return of $x$ with respect to a singleton set $A = \{y\}$ by $f(x|y)$ and refer to it as the pairwise marginal return of $x$ given $y$. In addition to submodularity, throughout this paper, we assume that the functions satisfy 
\begin{enumerate}
    \item Monotonicity: For all $A\subseteq B \subseteq X$, $f(A) \leq f(B)$,
    \item Normalization: $f(\emptyset) = 0$.
\end{enumerate}

Another property of submodular functions we utilize is the notion of curvature~ \cite{Matriod_curvature,fast_multi_stage}. The curvature of a submodular function $f$ is defined as 
\begin{equation}
    c = 1-\min_{A\subseteq X, x\in X\backslash A}\frac{f(x|A)}{f(x)}.\label{traditional_curvature}
\end{equation}
Note that if the value of $c = 0 $, the function is modular. 

We start by recalling the problem of maximizing a submodular function over the uniform matriod. Let $X$ be a set of elements and let $f:2^{X}\rightarrow \real_{\geq0}$ be a monotone normalized submodular function. We wish to solve the following problem:
\begin{align}
   &\max_{S \subseteq X}f(S)\label{basicproblem}\\
   &\text{s.t. } |S|\leq n \nonumber
\end{align}
This is the classical submodular maximization problem that can be solved to an approximation factor of $(1-1/e)$ using the following simple greedy algorithm, see \cite{Nemhauser}:
\begin{align}
    x_i = \argmax_{x\in X\backslash S_{i-1}}f(x|S_{i-1})\\
    S_i = S_{i-1}\cup\{x_i\}\nonumber,
\end{align}
where $i$ is the iteration of the algorithm and $S_i$ is the solution produced after $i$ iterations. In what follows, we often refer to this strategy as the full information greedy algorithm. 
A key focus of our contributions is to understand the limitations of algorithms that only have access to partial information about the objective function. We make this precise in the next definition. 
\begin{definition}($k$-wise Information)
Given a submodular function $f$, the $k$-wise information set is defined as the set of tuples $\{(S,f(S)) | S\subseteq X, |S|\leq k\}$. When $ k=2$, we refer to this as pairwise information. 
\end{definition}
An algorithm that has access to $k$-wise information can only use evaluations of $f$ on sets of size $k$ to form a decision. We denote the class of such algorithms by $\subscr{\Pi}{k-wise} $, or $\subscr{\Pi}{pairwise} $ when~$k=2$.
The main objective that we have in mind is to study Problem~\ref{basicproblem} with such limitations.
We now present a negative result that addresses the inapproximability of this problem. 
\begin{proposition}
\label{prop:inapprox}
  Consider Problem~\ref{basicproblem} with $k$-wise information. Then for every algorithm $\pi\in \subscr{\Pi}{k-wise}$, there exists a submodular function $f$ such that
  \[
  f(S^{\pi}) \leq \frac{k}{n} f(S^*),
  \]
  where $S^{\pi}$ is the solution constructed by $\pi$ and $S^*$ is the optimal solution.
\end{proposition}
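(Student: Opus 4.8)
The plan is to run an adversarial indistinguishability argument. Since any algorithm in $\subscr{\Pi}{k-wise}$ observes $f$ only on sets of size at most $k$, I would exhibit a whole family of submodular functions that agree on \emph{all} such sets. Because the algorithm's entire transcript of queries and answers is then identical across the family, a deterministic $\pi$ is forced to return the same set on every member; I can then single out the member that makes that output worst.

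Concretely, for each $n$-element subset $O \subseteq X$ I would define
\[
f_O(S) = |S \cap O| + \min\bigl(|S \setminus O|,\, k\bigr).
\]
The structural requirements are immediate: $f_O(\emptyset)=0$ and $f_O$ is monotone since both terms are nondecreasing in $S$. Submodularity holds because $S \mapsto |S\cap O|$ and $S \mapsto |S\setminus O|$ are modular, $\min(\cdot,k)$ is concave and nondecreasing so that $\min(|S\setminus O|,k)$ is submodular, and a sum of submodular functions is submodular. The crucial observation is that whenever $|S|\le k$ we have $|S\setminus O|\le k$, hence $\min(|S\setminus O|,k)=|S\setminus O|$ and therefore $f_O(S)=|S\cap O|+|S\setminus O|=|S|$. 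Thus on every set of size at most $k$ the value equals the cardinality, \emph{independently of} $O$, so the $k$-wise information set is identical for all members of the family $\{f_O\}$.

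It follows that a deterministic $\pi\in\subscr{\Pi}{k-wise}$ receives the same answer to each (possibly adaptive) query regardless of $O$, and hence returns a fixed set $S^\pi$ with $|S^\pi|\le n$ for every $O$. Here I would invoke the adversary: taking $|X|$ large enough that $|X|\ge 2n$, I can choose $O$ disjoint from $S^\pi$. Then $f_O(S^\pi)=\min(|S^\pi|,k)\le k$, while $f_O(O)=n$ forces $f_O(S^*)\ge n$. Combining these,
\[
f_O(S^\pi)\le k=\frac{k}{n}\,n\le \frac{k}{n}\, f_O(S^*),
\]
which is exactly the claimed inequality with $f=f_O$.

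The crux of the argument is designing the construction so that sets of size at most $k$ are \emph{genuinely} indistinguishable (value equal to cardinality) while a factor-$n/k$ gap is preserved between the planted optimum $O$ and any set the algorithm could name; the truncation $\min(\cdot,k)$ is precisely what hides the saturation until the set size exceeds $k$, and this is the step I expect to require the most care. Minor remaining points are ensuring enough room ($|X|\ge 2n$) to place $O$ away from $S^\pi$, and, should randomized algorithms be included, replacing the deterministic ``choose $O$ disjoint from $S^\pi$'' step by a uniformly random planted $O$ and bounding the expected overlap via Yao's principle.
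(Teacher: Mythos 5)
Your proposal is correct and is essentially the paper's own proof: your $f_O(S)=|S\cap O|+\min(|S\setminus O|,k)$ is exactly the paper's $f(S)=\min\{|S\cap V|,k\}+|S\cap V^*|$ with $O=V^*$ and $X\setminus O=V$, followed by the same indistinguishability-plus-adversary argument (values equal cardinality on sets of size $\leq k$, so the algorithm's output is fixed and the planted optimum can be chosen disjoint from it). The only cosmetic differences are that you verify submodularity via the concave-of-modular composition rather than the paper's case analysis, and that you make the adversary step (choosing $O$ disjoint from $S^\pi$, needing $|X|\geq 2n$) slightly more explicit.
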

\begin{proof}
  We begin by constructing a normalized, monotone submodular function $f$.  Consider a set $X$ that is partitioned into two disjoint sets $X = V \cup V^*$, where $|V^*| = n$ and~$|V| \geq n$. We define the function $f: 2^X \to \mathbb{R}_{\geq 0}$ as:
  \[
  f(S) = \min\{|S\cap V|,k\} + |S \cap V^*|.
  \]
  This function is normalized and monotone, and given $k$, it assigns a value of $k$ to all sets of size $k$. The $V$ can be thought of as the general set and $V^*$ is a special set where you are guaranteed to get value if you selected an element from $V^*$. The function, counts the number of elements of $S$ that are in~$V^*$.  However, for all sets $S$ where $|S| \leq k$ get mapped to their cardinality. 
  
    We now show that $f$ is also submodular.  Consider any two sets $A \subset B \subset X$ and an element $x \in X \setminus B$.  We show that 
   \[
   f(x|A) \geq f(x|B).
   \]
   First notice that $f(x|A)$ and $f(x|B)$ are each either $0$ or $1$, since adding an element can increase the function value by at most one.  There are two cases to consider:

   Case 1 ($x\in V^*$):  In this case $f(x|A) = 1$, since $A\cup \{x\}$ has one more element in $V^*$ than $A$.  Since $f(x|B) \leq 1$, the result follows.

   Case 2  ($x\in V$):  We assume that $f(x|B) = 1$, as otherwise the result holds.  Since $f(x|B) = 1$ and $x\in V$, we must have $|B \cap V| < k$.  But this implies that $|A \cap V| < k$ since $A \subseteq B$.  Thus $f(x|A) = 1$ and the result holds.

   For any set $S$ with $|S| \leq k$  we have that $f(S) = |S|$ which reveals no information on which elements of $S$ are in $V$ or $V^*$.  Hence, from the perspective of an algorithm in  $\subscr{\Pi}{k-wise}$, the elements in $X$ are indistinguishable.  Given any algorithm $\pi \in \subscr{\Pi}{k-wise}$, there exists an assignment of the elements of $X$ to $V$ and $V^*$ such that $f(S^{\pi}) = k$.  Since the optimal solution is $S^* = V^*$ achieving a value of $f(S^*) = n$, we obtain the desired result.
\end{proof}

This result highlights the challenges that arise under $k$-wise information constraints. As shown in the proof, there exists functions where their marginal returns with respect to sets of size $k$ or less, tell you nothing about the marginals with respect to sets with sizes greater than $k$. As we see in the sensor coverage example depicted in Figure \ref{fig:choices}, the marginal returns of a single sensor given subset of other sensors can be approximated by taking it's area and subtracting the pairwise overlaps of it self to the rest of the element in the subset. These overlaps can be derived from utilizing only pairwise marginals, therefore pairwise information provides us with information about the higher order marginals.

In the following two sections, we characterize functions where the marginals with respect to sets of size $k$ or smaller are informative of the higher order marginals using new notions of curvature.

\begin{figure}
    \centering
    \includegraphics[width = 0.4\textwidth]{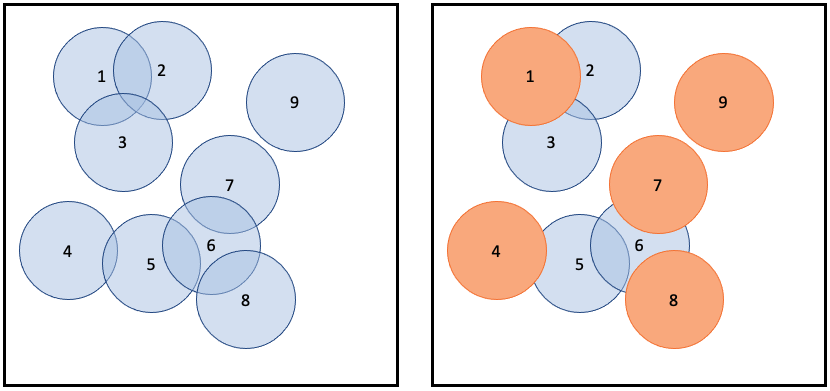}
    \caption{Left: Set of sensor footprints that an algorithm could potentially select. The objective is to select five sensors that maximize the area covered by the union of their footprints. Right: A set of 5 sensors that maximize the desired objective function. See that in this particular example the sensors that maximize the area covered have the minimum pairwise overlaps between them.}
    \label{fig:choices}
\end{figure}

\section{Pairwise Algorithms}
Our main objective in what follows is to leverage pairwise information to find an approximate solution to Problem~\ref{basicproblem}.
\subsection{Optimistic Algorithm}
 A natural strategy is to greedily select elements that maximize the \textit{estimated} marginal return using only pairwise information. First, note that 

\begin{equation}
    \min_{x_j\in A}f(x|x_j)\geq f(x|A), \label{ub}
\end{equation}
which holds by submodularity of $f$, because for all $\{x_j\} \subseteq A$, we have $f(x|x_j) \geq f(x|A)$. 
We will define a simple estimate of the marginal returns of $f$ as the left hand side of~\eqref{ub}$$\bar{f}(x|A) := \min_{x_j\in A}f(x|x_j).$$ The pairwise marginal for all $x_j\in A$ upper bounds $f(x|A)$ and hence we choose the minimum as it is the best available estimate of the true value of $f(x|A)$. In a nearly identical style to the classical greedy strategy, we now define an algorithm as follows:
\begin{align}
    x_i = \argmax_{x\in X\backslash S_{i-1}}\bar{f}(x|S_{i-1})\label{opt_greedy}\\
    S_i = S_{i-1}\cup \{x_i\}\nonumber.
\end{align}
Throughout this paper, we will refer to~\eqref{opt_greedy} as the \textit{optimistic} algorithm. In essence, the optimistic algorithm aims to greedily select elements with maximum \emph{potential} marginal return.

\subsection{Approximate Value Oracles}
To characterize the performance of the optimistic algorithm given by~\eqref{opt_greedy}, we consider the problem through the lens of maximizing a submodular objective function via surrogate objective functions. Following~\cite{fast_multi_stage}, we will discuss how to determine performance guarantees when using such surrogates. 

Let $\{x_1,\dots,x_n\} \subseteq X$ be the choices made by some algorithm. We denote by $S_i~=~\{x_1,\dots,x_i\}$ the choices selected after the $i$th iteration. Now let $\{x_1^g,\dots,x_n^g\} \subseteq X$ be such that each $x_i^g$ maximizes the marginal return of $f$ conditioned on $S_{i-1}$, i.e., 
$$x_i^g = \argmax_{x\in X\backslash S_{i-1}}f(x|S_{i-1}).$$

The set $\{x_1^g, \dots, x_n^g\}$ represents the elements that a greedy algorithm with full information about the objective $f$ would have selected if it had previously selected $S_{i-1}$.
Using these values, we can now measure the quality of a given algorithm's choices compared to that of an algorithm with full information about the objective. We do this by finding  $\alpha_i \in \real_{+}$, for $i~\in~\{1,\dots,n\}$ such that
\begin{equation}
    \alpha_i f(x_i|S_{i-1})\geq f(x_i^g|S_{i-1}).\label{approx_factor_def}
\end{equation}
By the greedy choice property of $x_i^g$, we have that 
\[
f(x_i|S_{i-1})\leq f(x_i^g|S_{i-1}).
\]
Hence, $\alpha_i \geq 1$, for all $i\in\{1,\dots,n\}$. From this point on, we call each $\alpha_i$ the approximation factor associated with $x_i$. 

In the general framework proposed in~\cite{fast_multi_stage}, the objective is to greedily maximize multiple surrogate objective functions, and to use these to generate approximate solutions. For our problem of submodular maximization with only pairwise information, we simply maximize using a single surrogate function $\bar{f}(x|S)$. We provide a simplified version of~\cite[Theorem~1]{fast_multi_stage} as follows.
\begin{theorem}
\label{approx_theorem}
Suppose that $S = \{x_1,\dots,x_n\}\subseteq X$ is the set of elements selected by an algorithm and $\{\alpha_1,\dots,\alpha_n\}$ are the set of approximation factors that satisfy~\eqref{approx_factor_def}. Let $S^*$ be the optimal solution to Problem~\ref{basicproblem}. Then
\begin{equation}
    f(S) \geq \left(1-e^{-\frac{1}{n}\sum_{i = 1}^n\frac{1}{\alpha_i}}\right)f(S^*). \label{eq:approx_theorem}
\end{equation}
\end{theorem}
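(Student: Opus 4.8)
The plan is to adapt the classical greedy analysis for submodular maximization due to~\cite{Nemhauser}, tracking the optimality gap $\delta_i := f(S^*) - f(S_i)$ across iterations and folding the approximation factors $\alpha_i$ into the per-step decay. The backbone of the argument is that, at every iteration, the remaining gap to the optimum can be charged against the \emph{true} greedy marginal $f(x_i^g|S_{i-1})$, which in turn is controlled by the algorithm's actual gain $f(x_i|S_{i-1})$ through the defining relation~\eqref{approx_factor_def}.

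First I would establish the key per-iteration inequality $f(S^*) - f(S_{i-1}) \leq n\, f(x_i^g|S_{i-1})$. By monotonicity, $f(S^*) \leq f(S^* \cup S_{i-1})$; writing the right-hand side as a telescoping sum of marginals over the elements of $S^* \setminus S_{i-1}$ and invoking submodularity to upper bound each such marginal by its value conditioned on $S_{i-1}$ gives $f(S^* \cup S_{i-1}) - f(S_{i-1}) \leq \sum_{x \in S^* \setminus S_{i-1}} f(x|S_{i-1})$. Since $|S^*| \leq n$ and $x_i^g$ maximizes the marginal of $f$ over $X \setminus S_{i-1}$, every term in this sum is at most $f(x_i^g|S_{i-1})$, so the whole sum is at most $n\, f(x_i^g|S_{i-1})$. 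I expect this step to be the main obstacle, as it is the one place where all three structural assumptions — monotonicity, submodularity, and the cardinality bound $|S^*| \leq n$ — must be combined correctly; everything afterward is essentially algebra.

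Next I would invoke~\eqref{approx_factor_def} in the form $f(x_i^g|S_{i-1}) \leq \alpha_i f(x_i|S_{i-1}) = \alpha_i\bigl(f(S_i) - f(S_{i-1})\bigr)$ to rewrite the bound above as $\delta_{i-1} \leq n\alpha_i(\delta_{i-1} - \delta_i)$. Rearranging yields the one-step contraction $\delta_i \leq \bigl(1 - \tfrac{1}{n\alpha_i}\bigr)\delta_{i-1}$. Iterating this from $\delta_0 = f(S^*) - f(\emptyset) = f(S^*)$ (using normalization) gives $\delta_n \leq \prod_{i=1}^n \bigl(1 - \tfrac{1}{n\alpha_i}\bigr) f(S^*)$. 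Finally, applying the elementary estimate $1 - x \leq e^{-x}$ to each factor collapses the product into $e^{-\frac{1}{n}\sum_{i=1}^n 1/\alpha_i}$, and recalling that $\delta_n = f(S^*) - f(S)$ delivers the claimed bound~\eqref{eq:approx_theorem}.
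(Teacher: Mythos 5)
Your proposal is correct and takes essentially the same route as the paper's proof: your per-iteration inequality $f(S^*) - f(S_{i-1}) \leq n\,f(x_i^g|S_{i-1})$ is exactly Nemhauser's Proposition~2.1 combined with the greedy-choice bound (which the paper cites rather than re-derives), and your gap contraction $\delta_i \leq \bigl(1 - \tfrac{1}{n\alpha_i}\bigr)\delta_{i-1}$ is just cleaner bookkeeping for the paper's induction on $\sum_i f(x_i|S_{i-1})$, producing the identical product $\prod_{i=1}^n\bigl(1-\tfrac{1}{n\alpha_i}\bigr)$ before the same $1-x\leq e^{-x}$ step. No gaps; the two arguments coincide in substance.
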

Given that we only maximize one surrogate function and in order to keep this paper self-contained, we provide a proof of this result in the Appendix which is simpler than the general result established in~\cite{fast_multi_stage}. Note that Theorem~\ref{approx_theorem} relies on $f$ being a normalized monotone submodular function. This result can be applied to any algorithm for Problem~\ref{basicproblem}, not just algorithms that only have access to pairwise information. An interesting remark about Theorem~\ref{approx_theorem} is that the performance bound depends essentially on the average of the approximation factors. Some of these factors could be large compared to the others, but as long as most of them are small, good performance is maintained.

\subsection{Optimistic Algorithm Approximation Performance}
\label{optimimistic_sub_section}
We aim to provide approximation guarantees for the optimistic algorithm. To give an intuition for what we are about to present, we consider the following example.
\begin{exmp}
\label{pile_example}
Consider the scenario depicted in Figure~\ref{fig:example_optimisitic_better}. Here, we wish to select four sensors to maximize the area of their combined footprints.  One of the simplest algorithms that satisfies the pairwise information constraint is the \emph{uninformed} greedy strategy \begin{align}
    x_i &= \argmax_{x\in X\backslash S_{i-1}}f(x) \label{uninformed_greedy}\\
    S_i &= S_{i-1}\cup \{x_i\}. \nonumber
\end{align}
We refer to this algorithm as \emph{uninformed} because it only use the most basic information about $f$ which it's values evaluated on single elements. For the \emph{uninformed} greedy strategy, the scenario described in Figure \ref{fig:example_optimisitic_better} could potentially lead to poor performance. This strategy cannot distinguish between its choices and therefore could select four sensors that almost perfectly overlap with each other (i.e., in the same pile), resulting in a low objective value. Alternatively, if we had used the optimistic algorithm, once one element is selected from a pile, the pairwise upper bound on the other elements in a pile would be low. In later iterations, the optimistic algorithm would avoid selecting elements in piles where elements have been previously selected from. Interestingly, we see that for each $i$, the difference between $f(x_i|S_{i-1})$ and $\bar{f}(x_i|S_{i-1})$ is small. We notice that in these scenarios, the value of $\bar{f}(x_i|S_{i-1})$ provides accurate information about the value of $f(x_i|S_{i-1})$. This is the idea that we want to capture in the following result. 
\begin{figure}
    \centering
    \includegraphics[width = 0.3\textwidth]{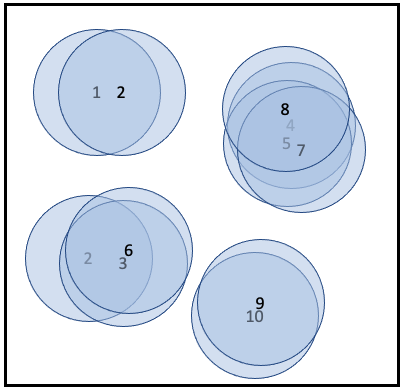}
    \caption{Example sensor coverage configuration where the optimistic algorithm performs better than uninformed greedy strategy}
    \label{fig:example_optimisitic_better}
\end{figure}

\end{exmp}

\begin{theorem}
\label{ub_approx_theorem_}
Let $S_{i-1}\subseteq X$ be the partial solution of optimistic algorithm after $(i-1)$ iterations, and let $x_i\in X$ be the element selected at the $i$th iteration. Then we have that,
\begin{equation}
\alphaopt{i}=\begin{cases} 
      1 & i\in\{1,2\}\\
      \frac{\bar{f}(x_i|S_{i-1})}{f(x_i|S_{i-1})}& i > 2 \label{eq:ub_factor_}\\
   \end{cases}
\end{equation}
satisfy \eqref{approx_factor_def} for all $i\leq n$.
\end{theorem}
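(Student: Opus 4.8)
The plan is to verify the approximation-factor inequality \eqref{approx_factor_def} directly in each of the two regimes of the piecewise definition \eqref{eq:ub_factor_}, relying only on the upper-bound property \eqref{ub} and the selection rule \eqref{opt_greedy}. Throughout, recall that $x_i^g=\argmax_{x\in X\setminus S_{i-1}}f(x|S_{i-1})$ is the full-information greedy choice evaluated against the \emph{same} partial solution $S_{i-1}$ that the optimistic algorithm has produced, so that the two algorithms are compared step by step on identical histories.

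First I would dispose of the base cases $i\in\{1,2\}$, where the point is that the surrogate $\bar f$ coincides \emph{exactly} with the true conditional marginal, so no slack is needed and $\alphaopt{i}=1$ works. For $i=1$ we have $S_0=\emptyset$, and the optimistic algorithm selects $x_1=\argmax_x f(x)$; by normalization $f(x|\emptyset)=f(x)$, so $x_1=x_1^g$. For $i=2$ the conditioning set $S_1=\{x_1\}$ is a singleton, whence $\bar f(x|S_1)=\min_{x_j\in S_1}f(x|x_j)=f(x|x_1)=f(x|S_1)$ with no strict inequality in \eqref{ub}. In both cases the optimistic rule maximizes exactly $f(\cdot|S_{i-1})$, so $x_i=x_i^g$ and $f(x_i|S_{i-1})=f(x_i^g|S_{i-1})$, making $\alphaopt{i}=1$ admissible in \eqref{approx_factor_def}.

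The main case is $i>2$, where $|S_{i-1}|\ge 2$ and the estimate is a genuine, possibly non-tight upper bound. Here I would chain two inequalities. The selection rule \eqref{opt_greedy} gives $\bar f(x_i|S_{i-1})\ge \bar f(x_i^g|S_{i-1})$, since $x_i$ maximizes $\bar f(\cdot|S_{i-1})$, and the upper-bound property \eqref{ub} applied to $x_i^g$ gives $\bar f(x_i^g|S_{i-1})\ge f(x_i^g|S_{i-1})$. Combining yields $\bar f(x_i|S_{i-1})\ge f(x_i^g|S_{i-1})$, and substituting the definition $\alphaopt{i}=\bar f(x_i|S_{i-1})/f(x_i|S_{i-1})$ gives $\alphaopt{i}\,f(x_i|S_{i-1})=\bar f(x_i|S_{i-1})\ge f(x_i^g|S_{i-1})$, which is precisely \eqref{approx_factor_def}. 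Applying \eqref{ub} to $x_i$ itself shows $\bar f(x_i|S_{i-1})\ge f(x_i|S_{i-1})$, confirming $\alphaopt{i}\ge 1$ as required.

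There is no deep obstacle; the real content is the observation that $\bar f$ is exact whenever the conditioning set has at most one element, which forces the factor to equal $1$ for $i\in\{1,2\}$, while for larger conditioning sets the ratio in \eqref{eq:ub_factor_} is engineered exactly to convert the chained inequality into the required form. The one point demanding care is the well-definedness of the ratio when $f(x_i|S_{i-1})=0$: the optimistic rule can be ``fooled'' into selecting an $x_i$ with large $\bar f$ but zero true marginal while a positive-marginal element remains, so no finite $\alphaopt{i}$ satisfies \eqref{approx_factor_def}. I would handle this by reading the definition with the convention $\alphaopt{i}=\infty$ in that degenerate case, so that $1/\alphaopt{i}=0$; this is exactly the (vanishing) contribution such a term makes to the averaged exponent in Theorem~\ref{approx_theorem}, and hence it leaves the downstream performance bound intact.
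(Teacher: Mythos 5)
Your proof is correct and follows essentially the same route as the paper's: handle $i\in\{1,2\}$ by observing that $\bar f$ coincides with $f$ on conditioning sets of size at most one (so $x_i = x_i^g$), then for $i>2$ chain the greedy-choice inequality $\bar f(x_i|S_{i-1})\geq \bar f(x_i^g|S_{i-1})$ with the upper-bound property \eqref{ub} applied to $x_i^g$. Your explicit treatment of the degenerate case $f(x_i|S_{i-1})=0$ via the convention $\alphaopt{i}=\infty$ is a detail the paper's proof silently glosses over, but it does not change the argument.
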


\begin{proof}
Let $x_i^g$ be the true greedy choice at iteration $i$ given $S_{i-1}$. 
For $i = \{1,2\}$, we have that 
\[
\bar{f}(x|S_{i-1}) = f(x|S_{i-1}).
\]
Hence,  $x_i = x_i^g$ and therefore, we can let $\alphaopt{1} = \alphaopt{2} = 1$. For $i>2$, based on from \eqref{approx_factor_def} let $\alpha_i^{\min}$ be the smallest value such that \eqref{approx_factor_def} which can be written as
\begin{equation}
    \alpha_i^{\min}= \frac{f(x_i^g|S_{i-1})}{f(x_i|S_{i-1})}\label{min_def}.
\end{equation}
Any approximation factor $\alpha_i$ such that $\alpha_i\geq \alpha_i^{\min}$ will satisfy~\eqref{approx_factor_def}. We will now upper bound $\alpha_i^{\min}$ as follows:
\begin{align}
    \alpha_i^{\min} &= \frac{f(x_i^g|S_{i-1})}{f(x_i|S_{i-1})}\nonumber\\
    &\leq \frac{\bar{f}(x_i^g|S_{i-1})}{f(x_i|S_{i-1})}\label{by_ub_def_}\\
    &\leq \frac{\bar{f}(x_i|S_{i-1})}{f(x_i|S_{i-1})}\label{by_greedy_choice_}
\end{align} 
where~\eqref{by_ub_def_} holds by definition and~\eqref{by_greedy_choice_} holds by the greedy choice property of the optimistic algorithm. Setting $\alphaopt{i}$ to be the right hand side of~\eqref{by_greedy_choice_}, we conclude the proof.
\end{proof}

The following corollary is an immediate consequence of Theorem \ref{ub_approx_theorem_}.

\begin{corollary}
\label{ub_corollary_}
Let $S\subseteq X$ be the solution produced by the optimistic algorithm and $S_{i-1}\subseteq S$ be the partial solution after $(i-1)$ iterations of the optimistic algorithm and let $x_i\in S$ be the element selected at the $i$-th iteration, then we have
\begin{equation}
    f(S) \geq \left(1- e^{-\frac{1}{n}\left(2 + 
    \sum_{i= 3}^n \frac{f(x_i|S_{i-1})}{\bar{f}(x_i|S_{i-1})}\right)}\right)f(S^*)\label{ratio_bound}.
\end{equation}
\end{corollary}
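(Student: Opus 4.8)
The plan is to feed the approximation factors computed in Theorem~\ref{ub_approx_theorem_} directly into the general performance bound of Theorem~\ref{approx_theorem}. Theorem~\ref{ub_approx_theorem_} establishes that the factors $\alphaopt{i}$ defined in~\eqref{eq:ub_factor_} satisfy the defining inequality~\eqref{approx_factor_def} for the optimistic algorithm, so they form an admissible set of approximation factors. Since Theorem~\ref{approx_theorem} applies to \emph{any} algorithm for Problem~\ref{basicproblem} together with any valid set of approximation factors, I would invoke it with $S$ taken to be the solution produced by the optimistic algorithm and $\alpha_i = \alphaopt{i}$.

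The only computation required is to evaluate the exponent $\tfrac{1}{n}\sum_{i=1}^n 1/\alphaopt{i}$. First I would split the sum into the two regimes prescribed by~\eqref{eq:ub_factor_}. For $i\in\{1,2\}$ we have $\alphaopt{i}=1$, so each of these two terms contributes $1/\alphaopt{i}=1$, for a combined contribution of $2$. For $i>2$ we have $\alphaopt{i}=\bar{f}(x_i|S_{i-1})/f(x_i|S_{i-1})$, so its reciprocal is exactly $f(x_i|S_{i-1})/\bar{f}(x_i|S_{i-1})$. Collecting these gives
\[
\sum_{i=1}^n \frac{1}{\alphaopt{i}} = 2 + \sum_{i=3}^n \frac{f(x_i|S_{i-1})}{\bar{f}(x_i|S_{i-1})}.
\]

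Substituting this expression into~\eqref{eq:approx_theorem} yields precisely the claimed bound~\eqref{ratio_bound}, completing the argument. There is essentially no obstacle here, as the corollary is a direct specialization of two established results; the only point requiring care is the bookkeeping of the reciprocal, namely that the bound of Theorem~\ref{approx_theorem} is stated in terms of $1/\alpha_i$ whereas Theorem~\ref{ub_approx_theorem_} reports $\alphaopt{i}$ as the ratio $\bar{f}/f$, so the summand appearing in~\eqref{ratio_bound} is correctly the inverted ratio $f/\bar{f}$. One should also implicitly assume $n\geq 2$ so that the $i\in\{1,2\}$ terms are present and the tail sum over $i>2$ is well-defined.
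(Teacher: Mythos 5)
Your proposal is correct and matches the paper's intent exactly: the paper states this corollary as an immediate consequence of Theorem~\ref{ub_approx_theorem_}, obtained precisely by substituting the factors $\alphaopt{i}$ from~\eqref{eq:ub_factor_} into the bound~\eqref{eq:approx_theorem} of Theorem~\ref{approx_theorem} and computing the reciprocals as you did. Your careful bookkeeping of the inverted ratio $f/\bar{f}$ is the only nontrivial step, and you handled it correctly.
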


We see that the approximation performance of the algorithm is dictated by the sum in the exponent. We can interpret the exponent as the mean of the set $$\left\{1,1,\frac{f(x_3|S_{2})}{\bar{f}(x_3|S_{2})}, \dots, \frac{f(x_n|S_{n-1})}{\bar{f}(x_n|S_{n-1})}\right\}.$$ That implies that, to get adequate performance from the optimistic algorithm, we need the value of $\bar{f}(x_i|S_{i-1})$ to be close to $f(x_{i}|S_{i-1})$ on \textit{average}.

We also see that the $\frac{f(x_{i}|S_{i-1})}{\bar{f}(x_{i}|S_{i-1})}$ is closely related to the traditional notion of curvature. Let us define the following quantity.

\begin{definition}[$k$-Marginal Curvature]\label{def:k-marginal}
The $k$-marginal curvature of $f$ given $S\subseteq X$ and $x\in X\backslash S$ is defined as 
\begin{equation}
    c_k(x|S) = 1- \max_{A \subseteq S, |A| < k}\frac{f(x|S)}{f(x|A)}.
\end{equation}
\end{definition}
To analyze the optimistic algorithm that only has access to pairwise information, we will work with the $2$-marginal curvature, which can be written as 
$$c_2(x|S) = 1- \frac{f(x|S)}{\bar{f}(x|S)}.$$

\begin{remark}
This $2$-marginal curvature is similar to the traditional notion of curvature \eqref{traditional_curvature}, but characterizes the relationship between the values of the pairwise upper bounds $\bar{f}(x|S)$ and true values of $f(x|S)$. A key difference between the two curvatures, is that there exist functions where the values of the $2$-marginal curvatures can be close to $0$ even though the value of traditional curvature is close to $1$. The sensor coverage function, described in Figure \ref{fig:example_optimisitic_better}, is an example of a function where the traditional curvature is close to $1$ and the  values of the $2$-marginal curvatures are close to $1$.
\end{remark}
This allows us to rewrite~\eqref{ratio_bound} as follows.
\begin{equation}
    f(S) \geq \left(1- e^{-\frac{1}{n}\left(2 + 
    \sum_{i= 3}^n 1- c_2(x_i|S_{i-1})\right)}\right)f(S^*) \label{opt_curve_corollary}
\end{equation}
We now can characterize the worst case performance in terms of the average of the 2-marginal curvatures, which capture the intuition from Example~\ref{pile_example}. In Figure~\ref{fig:example_optimisitic_better} the elements will have $2$-marginal curvatures close to zero, resulting in a strong approximation bound.

\subsection{Extension to $k$-wise Information}
The analysis from Subsection~\ref{optimimistic_sub_section} can be naturally extended to the problem of observing $k$-wise information. Suppose that we wish to approximately solve Problem \ref{basicproblem} using an algorithm that only has $k$-wise information available to it. We extend the pairwise optimistic algorithm to the $k$-wise optimistic algorithm as follows. Let us define an upper bound on the marginal returns using $k$-wise information. Let $x\in X$ and $S\subseteq X$ then we have the following upper bound
\begin{equation}
    \min_{A\subseteq S,|A|<k}f(x|A) \geq f(x|S),\label{k-wise upperbound}
\end{equation}
which holds by submodularity of $f$. We will denote the left hand side of \eqref{k-wise upperbound} as $$\bar{f}_k(x|S) := \min_{A\subseteq S,|A|<k}f(x|A).$$ We can now define the $k$-wise optimistic algorithm as 
\begin{align}
    x_i = \argmax_{x\in X\backslash S_{i-1}}\bar{f}_k(x|S_{i-1})\label{k-opt_greedy}\\
    S_i = S_{i-1}\cup \{x_i\}\nonumber.
\end{align}
By submodularity, we have that $\bar{f}(x|S) \geq \bar{f}_k(x|S)\geq f(x|S)$ for $k>1$. 
Our result is stated next. 
\begin{theorem}
\label{k_wise_ub_approx_theorem}
Let $S_{i-1}\subseteq X$ be the partial solution of $k$-wise optimistic algorithm after $(i-1)$ iterations and $x_i\in X$ be the element selected during the $i$th iteration. Then we have that
\begin{equation}
\alphaoptk{i}=\begin{cases} 
      1 & i \leq k\\
      \frac{\bar{f}_k(x_i|S_{i-1}))}{f(x_i|S_{i-1})}& i > k \label{eq:k_wise_ub_factor}\\
   \end{cases}
\end{equation}
satisfy \eqref{approx_factor_def} for all $i\leq n$.
\end{theorem}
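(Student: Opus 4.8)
The plan is to mirror, almost verbatim, the proof of Theorem~\ref{ub_approx_theorem_}, replacing the pairwise surrogate $\bar{f}$ by the $k$-wise surrogate $\bar{f}_k$ and shifting the cut-off from $2$ to $k$. Recall that establishing \eqref{approx_factor_def} for the claimed $\alphaoptk{i}$ amounts to showing, for each $i$, that $\alphaoptk{i}\, f(x_i|S_{i-1}) \geq f(x_i^g|S_{i-1})$, where $x_i^g = \argmax_{x\in X\backslash S_{i-1}} f(x|S_{i-1})$ is the full-information greedy choice and $x_i$ is the element actually selected by~\eqref{k-opt_greedy}. I would split into the two cases $i\leq k$ and $i>k$ exactly as in the pairwise statement.

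For the base case $i\leq k$, the key observation is that the partial solution satisfies $|S_{i-1}| = i-1 < k$, so the full set $S_{i-1}$ is itself an admissible argument in the minimization $\bar{f}_k(x|S_{i-1}) = \min_{A\subseteq S_{i-1},\,|A|<k} f(x|A)$. By submodularity, $f(x|A)\geq f(x|S_{i-1})$ for every $A\subseteq S_{i-1}$, so the minimum is attained at $A = S_{i-1}$ and $\bar{f}_k(x|S_{i-1}) = f(x|S_{i-1})$ for all $x$. Consequently the $k$-wise optimistic algorithm maximizes the \emph{true} marginal return at these iterations, giving $x_i = x_i^g$, and $\alphaoptk{i}=1$ trivially satisfies \eqref{approx_factor_def}.

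For $i>k$, I would introduce the smallest feasible factor $\alpha_i^{\min} = f(x_i^g|S_{i-1})/f(x_i|S_{i-1})$ and bound it from above by the claimed expression through two inequalities: first apply the $k$-wise upper bound \eqref{k-wise upperbound} to the numerator, replacing $f(x_i^g|S_{i-1})$ by the larger quantity $\bar{f}_k(x_i^g|S_{i-1})$; then invoke the greedy-selection property of~\eqref{k-opt_greedy}, namely $\bar{f}_k(x_i|S_{i-1}) \geq \bar{f}_k(x_i^g|S_{i-1})$, to replace the numerator by $\bar{f}_k(x_i|S_{i-1})$. Chaining these yields
\[
\alpha_i^{\min} = \frac{f(x_i^g|S_{i-1})}{f(x_i|S_{i-1})} \leq \frac{\bar{f}_k(x_i^g|S_{i-1})}{f(x_i|S_{i-1})} \leq \frac{\bar{f}_k(x_i|S_{i-1})}{f(x_i|S_{i-1})},
\]
which is precisely $\alphaoptk{i}$; since any $\alpha_i \geq \alpha_i^{\min}$ satisfies \eqref{approx_factor_def}, this completes that case.

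The argument is essentially mechanical once the pairwise proof is in hand, so I do not expect a genuine obstacle. The only step requiring a moment of care is the base case: one must justify that when $|S_{i-1}|<k$ the surrogate $\bar{f}_k$ coincides \emph{exactly} with the true marginal, which is where submodularity and the admissibility of $A = S_{i-1}$ in the minimization are used. Everything in the case $i>k$ is a direct transcription of inequalities \eqref{by_ub_def_} and \eqref{by_greedy_choice_} with $\bar{f}$ replaced by $\bar{f}_k$.
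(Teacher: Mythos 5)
Your proof is correct and follows essentially the same route as the paper's: the case $i\leq k$ forces $x_i$ to coincide with the true greedy choice, and the case $i>k$ bounds $\alpha_i^{\min}$ by chaining the $k$-wise upper bound \eqref{k-wise upperbound} with the greedy-selection property of \eqref{k-opt_greedy}. In fact, your base case is slightly more careful than the paper's, which only asserts $\bar{f}_k(x|S_{i-1}) = f(x|S_{i-1})$ ``by definition,'' whereas you justify it via the admissibility of $A = S_{i-1}$ in the minimization together with submodularity.
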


The proof is nearly identical to the one of Theorem~\ref{ub_approx_theorem_} and can be found in the appendix. The following result is an immediate consequence of Theorem~\ref{k_wise_ub_approx_theorem}. 
\begin{corollary}
\label{k_corollary}
Let $S\subseteq X$ be the solution produced by the $k$-wise optimistic algorithm, then we have

\begin{equation}
    f(S) \geq \left(1- e^{-\frac{1}{n}\left(k+ \sum_{i=k+1}^n\frac{f(x_i|S_{i-1})}{\bar{f}_k(x_i|S_{i-1})}\right)}\right)f(S^*)\label{eq:k_corollary}.
\end{equation}
\end{corollary}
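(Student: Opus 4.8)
The plan is to derive Corollary~\ref{k_corollary} by composing the two ingredients already established: Theorem~\ref{k_wise_ub_approx_theorem}, which supplies a concrete family of valid approximation factors for the $k$-wise optimistic algorithm, and Theorem~\ref{approx_theorem}, which converts any valid family of approximation factors into a guarantee relative to $f(S^*)$.

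First I would invoke Theorem~\ref{k_wise_ub_approx_theorem} to conclude that, for the set $S=\{x_1,\dots,x_n\}$ produced by the $k$-wise optimistic algorithm, the factors $\alphaoptk{i}$ given by~\eqref{eq:k_wise_ub_factor} satisfy the defining inequality~\eqref{approx_factor_def} at every iteration $i\le n$. Since $f$ is normalized, monotone, and submodular, the hypotheses of Theorem~\ref{approx_theorem} are met, so I may substitute $\alpha_i=\alphaoptk{i}$ into~\eqref{eq:approx_theorem}, yielding $f(S)\ge \left(1-e^{-\frac{1}{n}\sum_{i=1}^n 1/\alphaoptk{i}}\right)f(S^*)$.

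The remaining step is to evaluate the exponent by splitting the sum at $i=k$. For $i\le k$ we have $\alphaoptk{i}=1$, so each of these $k$ terms contributes $1/\alphaoptk{i}=1$, giving a total of $k$. For $i>k$ we have $\alphaoptk{i}=\bar{f}_k(x_i|S_{i-1})/f(x_i|S_{i-1})$, so its reciprocal is exactly $f(x_i|S_{i-1})/\bar{f}_k(x_i|S_{i-1})$. Collecting these gives $\sum_{i=1}^n 1/\alphaoptk{i}=k+\sum_{i=k+1}^n f(x_i|S_{i-1})/\bar{f}_k(x_i|S_{i-1})$, which is precisely the exponent appearing in~\eqref{eq:k_corollary}, completing the argument.

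Since both cited theorems are already proved, this is essentially a bookkeeping derivation and I do not anticipate a substantive obstacle; the only points requiring care are confirming that the same sequence of choices $x_i$ (and partial solutions $S_{i-1}$) is used in both theorems, so that the $\alpha_i$ are genuinely valid for the algorithm producing $S$, and correctly handling the reciprocal of the piecewise definition so that the $i\le k$ block sums to $k$ rather than being dropped.
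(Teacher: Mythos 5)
Your proposal is correct and matches the paper's intended argument exactly: the paper presents Corollary~\ref{k_corollary} as an immediate consequence of Theorem~\ref{k_wise_ub_approx_theorem}, obtained precisely by substituting the factors $\alphaoptk{i}$ from~\eqref{eq:k_wise_ub_factor} into the bound~\eqref{eq:approx_theorem} of Theorem~\ref{approx_theorem} and splitting the sum at $i=k$. Your attention to the reciprocal of the piecewise definition and to the consistency of the sequence $x_i$, $S_{i-1}$ across the two theorems is exactly the right bookkeeping.
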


Note that using this result, we can rewrite~\eqref{eq:k_corollary} as
\begin{equation}
    f(S) \geq \left(1- e^{-\frac{1}{n}\left(k + 
    \sum_{i= k+1}^n 1- c_k(x_i|S_{i-1})\right)}\right)f(S^*).
\end{equation}
Comparing this to the scenario with pairwise information, we see that access to more information improves approximation guarantees. In particular, since $c_2(x|S)\geq c_{k}(x|S)$ for all $S\subseteq X$ and $x \in X\backslash S$, we can guarantee that 
\begin{align}
    \frac{1}{n}
    &\left(k + 
    \sum_{i= k+1}^n 1- c_k(x_i|S_{i-1}) \right) \geq\nonumber\\
    &\frac{1}{n}\left(2 + \sum_{i= 3}^n 1- c_2(x_i|S_{i-1})\right).
\end{align}
This implies that approximation bound in Corollary~\ref{k_corollary} is stronger than Corollary~\ref{ub_corollary_}.

Having access to $k$-wise information provides us with stronger approximation bounds, but we trade off computation performance. We are required to compute the minimum marginal overall subsets $A\subseteq S_{i-1}$, where $|A|< k$ for each $x\in X$. When $k\leq |S_i|$, we need to check $|S_i|\choose{k-1}$ subsets of $S$ to find the minimum. This becomes expensive to do computationally as $S_{i-1}$ grows larger. If $k~=~3$, the computation of each marginal is quadratic in $|S|$ and can be expensive to compute. From a practical perspective, we can actually compute the pairwise optimistic algorithm efficiently; we will discuss this in Section~\ref{time_complexity}. 

\section{Pairwise Algorithms Utilizing Supermodularity of Conditioning}

In this section we introduce an additional property that a submodular function can possess which is use useful when we only have access to pairwise information. This property which is called supermodularity of conditioning and is related to montonicity, allows us to compute performance bounds for an algorithm ``post-hoc'' using only pairwise information.

\subsection{Post-Hoc Performance Bounds}
To characterize the approximation performance of an algorithm $\pi\in \subscr{\Pi}{pairwise}$ using Theorem \ref{ub_approx_theorem_} we are required to compute the full marginal of the function $f$, which may not be available in practice. Alternatively, after we execute an algorithm $\pi$ to produce a solution 
\[
S^{\pi} = \{x_1^\pi,\dots,x_n^\pi\}\subseteq{X},
\]
we can determine $\gamma\in \real_{\geq 0}$ such that 
\begin{equation}\label{eq:gamma}
    f(S^{\pi})\geq \gamma f(S^*).
\end{equation}  
This is done by bounding $\alpha_i$ in~\eqref{approx_factor_def} using only pairwise information. As described in the proof of Theorem~\ref{ub_approx_theorem_}, the smallest value of $\alpha_i$ that will satisfy~\eqref{approx_factor_def} is $\alpha_i^{\min}$. 

Let  
\[
S^\pi_{i} = \{x_1^\pi,\dots,x_i^{\pi}\}\subseteq{X}
\]
be the partial solution of $S^{\pi}$. Then, we have that 
$$\alpha_i^{\min}= \frac{f(x_i^g|S^{\pi}_{i-1})}{f(x_i^{\pi}|S^{\pi}_{i-1})}\leq \frac{\max_{x\in X\backslash S^{\pi}_{i-1}}\bar{f}(x|S^{\pi}_{i-1})}{f(x_i^{\pi}|S^{\pi}_{i-1})}.$$
By lower bounding $f(x_i^{\pi}|S^{\pi}_{i-1})$ using pairwise information, we obtain an $\alpha_i$ that satisfies~\eqref{approx_factor_def}, and therefore, Theorem \ref{approx_theorem} allows us to find $\gamma$ that 
satisfies~\eqref{eq:gamma}.

If we impose an additional monotonicity property on $f$ called \textit{supermodularity of conditioning}, then we are able find a lower bound on the marginal returns of $f$ using only pairwise information.

\begin{definition}(Supermodularity of Conditioning)
A submodular function $f$ possess the property of \textit{supermodularity of conditioning} if for all $S\subseteq X$, $A\subseteq B\subseteq X$ and $C \subseteq X\backslash B$, we have that \begin{equation}
    f(S|A)-f(S|A,C) \geq f(S|B)-f(S|B,C). \label{super_condition}
\end{equation}
\end{definition}

Supermodularity of conditioning is a higher order monotonicity property which describes how the \textit{redundancy} of two sets are affected by conditioning. Suppose that $A = \emptyset$, the redundancy between $S$ and $C$ is $f(S)-f(S|C)$, then by further conditioning by $B$ reduces the redundancy. Supermodularity of conditioning has been used in the context of distributed
submodular maximization in~\cite{distsubmaxpartition}. Some notable examples of functions that exhibit supermodularity of conditioning are weighted set coverage, area coverage and probabilistic set coverage. We recall the following result from~\cite{distsubmaxpartition}. 
\begin{lemma}{(Pairwise Redundancy Bound)}
\label{lemma:pairwise_redundancy_bound}
Let $f$ be a submodular function on $X$ that exhibits supermodularity of conditioning and let $A,B,C \subseteq X$ be disjoint subsets. Then
\begin{equation}
    f(A|B)-f(A|B,C) \leq \sum_{c\in C}f(c)-f(c|A)\label{reduncany_bound}.
\end{equation}
\end{lemma}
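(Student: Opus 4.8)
The plan is to rewrite the left-hand side so that it decomposes elementwise over $C$, and then to apply supermodularity of conditioning term by term. First I would record the standard symmetry of the redundancy: expanding both sides in terms of $f$ evaluated on unions shows that
\[
f(A|B) - f(A|B,C) = f(C|B) - f(C|A,B),
\]
since each side equals $f(A\cup B) + f(B\cup C) - f(B) - f(A\cup B\cup C)$. This step is purely algebraic, but it is the key reorganization: it puts $C$, rather than $A$, in the role of the conditioned set, which is exactly what lets me split the quantity over the individual elements of $C$.

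Next I would telescope. Writing $C = \{c_1,\dots,c_m\}$ and $C_{j-1} = \{c_1,\dots,c_{j-1}\}$, the chain-rule expansions $f(C|B) = \sum_{j} f(c_j \mid B, C_{j-1})$ and $f(C|A,B) = \sum_{j} f(c_j \mid A, B, C_{j-1})$ give
\[
f(C|B) - f(C|A,B) = \sum_{j=1}^{m} \bigl( f(c_j \mid B, C_{j-1}) - f(c_j \mid A, B, C_{j-1}) \bigr).
\]
Each summand is the redundancy between the singleton $c_j$ and the set $A$, now measured under the conditioning $B \cup C_{j-1}$.

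The crux is then a single application of supermodularity of conditioning to each summand. In the definition \eqref{super_condition} I would take the conditioned set to be $S = \{c_j\}$, the nested pair of conditioning sets to be $\emptyset \subseteq B \cup C_{j-1}$, and the \emph{additional} conditioning set to be $A$. The disjointness hypotheses of the lemma make this legitimate: since $A$, $B$, $C$ are pairwise disjoint and $C_{j-1} \subseteq C$, we have $A \subseteq X \setminus (B \cup C_{j-1})$, as required by the definition. The inequality then yields
\[
f(c_j) - f(c_j \mid A) \;\geq\; f(c_j \mid B, C_{j-1}) - f(c_j \mid A, B, C_{j-1}),
\]
that is, moving from the empty conditioning to the richer conditioning $B \cup C_{j-1}$ can only shrink the redundancy between $c_j$ and $A$. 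Summing these $m$ inequalities and combining with the symmetry identity and the telescoping expansion yields precisely $f(A|B) - f(A|B,C) \leq \sum_{c\in C} \bigl( f(c) - f(c|A) \bigr)$.

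I expect the main obstacle to be bookkeeping rather than anything conceptual: correctly matching the four sets when instantiating \eqref{super_condition}, in particular recognizing that the additional conditioning in the definition must be taken to be $A$ while the monotone chain of conditioning sets runs from $\emptyset$ up to $B \cup C_{j-1}$. Once that correspondence is pinned down, verifying the disjointness side-condition for each $j$ and carrying out the telescoping are routine.
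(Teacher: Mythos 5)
Your proof is correct, but note that there is no in-paper proof to compare it against: the paper states Lemma~\ref{lemma:pairwise_redundancy_bound} without proof, recalling it from the cited reference~\cite{distsubmaxpartition}. Your argument is a valid, self-contained derivation, and every step checks out. The symmetry identity $f(A|B)-f(A|B,C)=f(C|B)-f(C|A,B)$ holds since both sides expand to $f(A\cup B)+f(B\cup C)-f(B)-f(A\cup B\cup C)$; the telescoping expansions of $f(C|B)$ and $f(C|A,B)$ over an ordering $c_1,\dots,c_m$ of $C$ are the standard chain rule; and your instantiation of~\eqref{super_condition} with $S=\{c_j\}$, nested conditioning sets $\emptyset\subseteq B\cup C_{j-1}$, and additional set $A$ is legitimate because the disjointness hypothesis gives $A\subseteq X\setminus(B\cup C_{j-1})$, yielding $f(c_j)-f(c_j|A)\geq f(c_j|B,C_{j-1})-f(c_j|A,B,C_{j-1})$ for each $j$; summing over $j$ gives exactly~\eqref{reduncany_bound}. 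A minor stylistic alternative would be to first apply supermodularity of conditioning once with $S=A$, $\emptyset\subseteq B$, and additional set $C$ to reduce to the case $B=\emptyset$ (i.e., $f(A|B)-f(A|B,C)\leq f(A)-f(A|C)$), and then run the symmetry-plus-telescoping argument without $B$ in the conditioning; your one-pass version that carries $B$ throughout is equally rigorous and no longer, so there is nothing to fix.
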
 
We can now state a result establishing a lower bound on the marginal return.
\begin{theorem}(Pairwise Marginal Lower Bound)
\label{theorem:pairwise_lower_bound}
Let $f$ be a submodular function that exhibits supermodularity of conditioning. Then for $x\in X$ and $S\subseteq X$
\begin{equation}
    f(x|S)\geq f(x)-\sum_{x_j\in S}f(x)-f(x|x_j).
    \label{lb}
\end{equation}
\end{theorem}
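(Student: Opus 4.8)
The plan is to recast the target inequality \eqref{lb} as a statement about the \emph{redundancy} of the singleton $\{x\}$ against the set $S$, and then to invoke the Pairwise Redundancy Bound (Lemma~\ref{lemma:pairwise_redundancy_bound}) with a judicious choice of the three subsets, finishing with a symmetry observation that converts the bound into the stated form. First I would rearrange \eqref{lb}: adding the sum and subtracting $f(x|S)$ shows that the claim is equivalent to the redundancy bound
$$
f(x)-f(x|S)\leq \sum_{x_j\in S}\bigl(f(x)-f(x|x_j)\bigr).
$$
The left-hand side $f(x)-f(x|S)$ is exactly the redundancy of $\{x\}$ with $S$ conditioned on the empty set, which is precisely the quantity the lemma controls.

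Next I would apply Lemma~\ref{lemma:pairwise_redundancy_bound} with $A=\{x\}$, $B=\emptyset$, and $C=S$. These are disjoint (using $x\notin S$, which holds in our setting since $x$ is a candidate element not yet selected), so the lemma gives
$$
f(x)-f(x|S)=f(\{x\}|\emptyset)-f(\{x\}|\emptyset,S)\leq \sum_{x_j\in S}\bigl(f(x_j)-f(x_j|x)\bigr).
$$
This already matches the desired inequality in shape; the remaining gap is purely cosmetic, since the summand here is $f(x_j)-f(x_j|x)$ rather than $f(x)-f(x|x_j)$.

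The final step is to note that the pairwise redundancy is symmetric: for any $x_j$, both $f(x_j)-f(x_j|x)$ and $f(x)-f(x|x_j)$ expand to $f(x)+f(x_j)-f(\{x,x_j\})$, hence they are equal. Substituting this identity term-by-term into the sum turns the right-hand side into $\sum_{x_j\in S}\bigl(f(x)-f(x|x_j)\bigr)$, which is exactly the bound we wanted, and rearranging recovers \eqref{lb}. The only real obstacle is conceptual rather than computational: recognizing that the redundancy bound, which naturally produces terms conditioned on $x$ and summed over $S$, can be matched to the desired pairwise terms conditioned on each $x_j$ via the symmetry of pairwise redundancy. Once that is noticed, the argument reduces to a single invocation of the lemma followed by a substitution, with no heavy calculation required.
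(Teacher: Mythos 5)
Your proposal is correct and follows essentially the same route as the paper's own proof: both apply Lemma~\ref{lemma:pairwise_redundancy_bound} with $A=\{x\}$, $B=\emptyset$, $C=S$, and then convert each summand $f(x_j)-f(x_j|x)$ into $f(x)-f(x|x_j)$ via the symmetry $f(x)+f(x_j)-f(\{x,x_j\})$ (which the paper compresses into ``by the definition of the marginal return''). Your explicit note on the disjointness requirement ($x\notin S$) is a minor point of added care, not a difference in method.
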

\begin{proof}
Since $f$ exhibits supermodularity of conditioning, applying Lemma \ref{lemma:pairwise_redundancy_bound} with $A = \{x\}$, $B = \emptyset$ and $C = S$, we have
\begin{align}
    f(x)-f(x|S) &\leq \sum_{x_j\in S}f(x_j) - f(x_j|x)\nonumber\\ 
    &= \sum_{x_j\in S}f(x) -f(x|x_j) \label{theorem:pairwise_lower_bound:eq:2},
\end{align}
where the last equality hold by the definition of the marginal return, yielding the result.
\end{proof}

For $x\in X$ and $S\subseteq X$ we define $$\underline{f}(x|S):= f(x)-\sum_{x_j\in S}f(x)-f(x|x_j).$$

We can now directly use this lower bound on the marginal returns to bound $\alpha_i^{\min}$. We now have that 
$\alpha_i^{\min} \leq \upscr{\alpha_i}{pairwise}$, where 
\begin{equation}
\upscr{\alpha_i}{pairwise}= \left\{ \begin{array}{cc}
        \frac{\max_{x\in X\backslash S^{\pi}_{i-1}}\bar{f}(x|S^{\pi}_{i-1})}{
        \underline{f}(x_i^{\pi}|S^{\pi}_{i-1})} & \underline{f}(x_i^{\pi}|S^{\pi}_{i-1})\geq 0 \\
         \infty& \underline{f}(x_i^{\pi}|S^{\pi}_{i-1})<  0 
    \end{array}\right.. \label{post_hoc_alpha}
\end{equation}

Now, $\upscr{\alpha_i}{pairwise}$ satisfies~\eqref{approx_factor_def} and is computable using only pairwise information. Note that we need to set $\upscr{\alpha_i}{pairwise}= \infty$ when $f(x_i^{\pi}|S^{\pi}_{i-1}) <0$ as otherwise, the resulting $\upscr{\alpha_i}{pairwise}$ would not upper bound $\alpha_i^{\min}$.

We now present an algorithm that, given $\pi\in \subscr{\Pi}{pairwise}$ and pairwise information about $f$, produces a worst-case performance bound $\gamma$ such that the solution $S^{\pi}$ satisfies $f(S^{\pi})~\geq \gamma f(S^*)$.
\begin{algorithm}
\SetAlgoLined 
\KwIn{$S^{\pi}$, $X$}
\KwResult{$\gamma$ such that $f(S)\geq \gamma f(S^*)$}
$S_0^{\pi} \leftarrow \emptyset$\;

\For{$i \leftarrow 1,\dots, n$}{
    select $x_i^{\pi}$ from $S^{\pi}\backslash S_{i-1}^{\pi}$\;
    \uIf{$\underline{f}(x_i^{\pi}|S_{i-1}^{\pi}) \geq 0$}{
        $\alpha_i \leftarrow \frac{\max_{x\in X\backslash S_{i-1}^{\pi}}\bar{f}(x|S_{i-1}^{\pi})}{\underline{f}(x_i^{\pi}|S_{i-1}^{\pi})}$\;
    }
    \uElse{
        $\alpha_i\leftarrow \infty$\;
    }
    $S_{i}^{\pi} \leftarrow S_{i-1}^{\pi} \cup \{x_i^{\pi}\}$\;
}
al$\gamma \leftarrow 1- e^{-\frac{1}{n}\sum_{i=1}^{n}\frac{1}{\alpha_i}}$\;
 \caption{Pairwise Information Post-Hoc Bound}
 \label{algo:posthoc}
\end{algorithm}

Algorithm~\ref{algo:posthoc} provides us the means to find performance bounds for an arbitrary algorithm $\pi$ given only pairwise information about $f$.  This does not guarantee the performance before execution, but it does provide a way to verify performance of an algorithm without having to explicitly compute $f(S^{\pi})$ or $f(S^*)$. 

We end this section with a few remarks about supermodularity of conditioning. 
\begin{remark}[On Supermodularity of Conditioning]
Note that assuming $f$ possess supermodularity of conditioning does not affect the hardness results for submodular maximization.  It is shown~\cite{distsubmaxpartition} that the weighted set cover problem possesses the supermodularity of conditioning while still satisfying the hardness results. 
\end{remark}

\subsection{Pessimistic Algorithm}
We next propose another pairwise algorithm which we will call the \textit{pessimistic} algorithm, given by
\begin{align}
    x_i = \argmax_{x\in X\backslash S_{i-1}}\underline{f}(x|S_{i-1})\label{pess_greedy}\\
    S_i = S_{i-1}\cup \{x_i\}\nonumber.
\end{align}
This algorithm enjoys similar guarantees as the optimistic algorithm when curvature assumptions are made but can outperform the optimistic algorithm in certain scenarios. Similar to the optimistic algorithm, we greedily select elements with the highest \textit{guaranteed value}, which is exactly greedily minimizing the approximation factor $\alpha_i$ in Algorithm \ref{algo:posthoc}. What differs from the optimistic algorithm is that we require the additional assumption of supermodularity of conditioning on the objective function, and take advantage of it. Later in our experimental results, we will show the effectiveness of the pessimistic algorithm for a probabilistic coverage problem.

Using an alternative definition of curvature, we can produce a similar performance bound as the optimistic algorithm. Let us define the  $k$-cardinality curvature as follows.
\begin{definition}[$k$-Cardinality Curvature]
\label{total_k_wise_curve_def}
Let $x\in X$ and $S\subseteq S$. We define the $k$-cardinality curvature $\tau_k$ as
 \begin{equation}
     \tau_k = 1-\min_{x \in X, A\subseteq X,|A|<k}\frac{f(x|A)}{f(x)}.
\end{equation}
\end{definition}
What differs between this notion of curvature and the $k$-marginal curvature is that it compares the values of the marginals of $x$ with respect to sets of size less than $k$ to the values of $f$ evaluated on a singletons.

The $2$-cardinality curvature can be written as
$$\tau_2 = 1-\min_{x \in X,y\in X\backslash \{x\}}\frac{f(x|y)}{f(x)}.$$
This quantity satisfies $$\tau_2 \geq 1-\frac{f(x|y)}{f(x)},$$ for all $x, y\in X$. 

Using the lower bound~\eqref{lb}, we have that
\begin{align}
    \underline{f}(x|S) &= f(x) -\sum_{x_j\in S}f(x)-f(x|x_j)\nonumber\\
    & = f(x)\left(1-\sum_{x_j\in S}1-\frac{f(x|x_j)}{f(x)}\right)\nonumber\\
    &\geq f(x)(1-|S|\tau_2).\nonumber
\end{align}
Since we know that $f(x|S) \geq 0$, 
\begin{equation}
   f(x|S)\geq  f(x)(1-\min\{|S|\tau_2,1\}). \label{lb_curve_lowerbound}
\end{equation}
This leads to the following result.

\begin{theorem}
\label{lb_approx_theorem}
Let $f$ be a normalized monotone submodular function that possesses supermodularity of conditioning. For the solution produced by the pessimistic algorithm the approximation factors,
\begin{equation}
    \alphapes{i}=\begin{cases} 
      1 & i \leq 2\\
      \frac{1}{1-\min\{(i-1)\tau_2,1\}}& i > 2 \label{alpha_pess}\\
  \end{cases}
\end{equation}
satisfy \eqref{approx_factor_def} for all $i\leq n$.
\end{theorem}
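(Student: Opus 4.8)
The plan is to reduce the statement to showing that for each $i$ the proposed factor dominates the minimal admissible one, i.e.\ that $\alpha_i^{\min} = f(x_i^g|S_{i-1})/f(x_i|S_{i-1}) \le \alphapes{i}$, exactly as in the proof of Theorem~\ref{ub_approx_theorem_}; this is precisely what is needed for \eqref{approx_factor_def} to hold. I would treat the base cases $i \in \{1,2\}$ and the general case $i > 2$ separately.

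For $i = 1$ we have $S_0 = \emptyset$, so $\underline{f}(x|\emptyset) = f(x) = f(x|\emptyset)$ and the pessimistic rule \eqref{pess_greedy} coincides with the true greedy rule; hence $x_1 = x_1^g$ and $\alphapes{1} = 1$ works. For $i = 2$, since $S_1 = \{x_1\}$ is a singleton, $\underline{f}(x|S_1) = f(x|x_1) = f(x|S_1)$, so again the pessimistic choice equals the greedy choice, $x_2 = x_2^g$, and $\alphapes{2} = 1$ satisfies \eqref{approx_factor_def}.

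For $i > 2$, I would assemble a lower bound on $f(x_i|S_{i-1})$ through the following chain. The pairwise lower bound of Theorem~\ref{theorem:pairwise_lower_bound} gives $f(x_i|S_{i-1}) \ge \underline{f}(x_i|S_{i-1})$; the greedy choice property of the pessimistic algorithm gives $\underline{f}(x_i|S_{i-1}) \ge \underline{f}(x_i^g|S_{i-1})$; the $\tau_2$ estimate behind \eqref{lb_curve_lowerbound} gives $\underline{f}(x_i^g|S_{i-1}) \ge f(x_i^g)\bigl(1-(i-1)\tau_2\bigr)$ (using $|S_{i-1}| = i-1$); and submodularity gives $f(x_i^g) \ge f(x_i^g|S_{i-1})$. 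Provided $1-(i-1)\tau_2 > 0$, multiplying the last inequality by this nonnegative quantity and chaining yields $f(x_i|S_{i-1}) \ge \bigl(1-(i-1)\tau_2\bigr) f(x_i^g|S_{i-1})$, which rearranges to $\alphapes{i} f(x_i|S_{i-1}) \ge f(x_i^g|S_{i-1})$, as required.

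The only delicate point—and the step I would flag as the main obstacle—is the $\min\{(i-1)\tau_2,1\}$ in the definition of $\alphapes{i}$. The rearrangement above is valid only when $1-(i-1)\tau_2 > 0$, since otherwise multiplying the submodularity inequality $f(x_i^g) \ge f(x_i^g|S_{i-1})$ by $1-(i-1)\tau_2$ would reverse its direction. I would therefore split into the case $(i-1)\tau_2 < 1$, where $\min\{(i-1)\tau_2,1\} = (i-1)\tau_2$ and the computation applies verbatim, and the case $(i-1)\tau_2 \ge 1$, where $\alphapes{i} = \infty$ and \eqref{approx_factor_def} holds trivially since the factor imposes no constraint. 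I expect this boundary bookkeeping, rather than any substantive inequality, to be the only subtlety.
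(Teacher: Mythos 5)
Your proof is correct and follows essentially the same route as the paper's: both handle the base cases $i\in\{1,2\}$ via $\underline{f}(x|S_{i-1}) = f(x|S_{i-1})$, and for $i>2$ both chain the pairwise lower bound of Theorem~\ref{theorem:pairwise_lower_bound}, the greedy choice property of the pessimistic rule, the curvature estimate \eqref{lb_curve_lowerbound}, and $f(x_i^g|S_{i-1}) \leq f(x_i^g)$ to dominate $\alpha_i^{\min}$. Your multiplicative phrasing with an explicit split on the saturated case $(i-1)\tau_2 \geq 1$ is marginally more careful than the paper's ratio chain, which implicitly assumes the relevant denominators are positive, but it is the same argument.
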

The proof can be found in the appendix. The following corollary immediately follows.
\begin{corollary}
\label{lb_corollary}
Let $S\subseteq X$ be the solution produced by the pessimistic algorithm then we have
\begin{equation}
    f(S) \geq \left(1- e^{-\frac{1}{n}\left(2 + \sum_{i=3}^n(1-\min\{(i-1)\tau_2,1\})\right)}\right)f(S^*)\label{pess_corollary}
\end{equation}
\end{corollary}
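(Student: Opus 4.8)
The plan is to derive Corollary~\ref{lb_corollary} as a direct substitution into Theorem~\ref{approx_theorem}, using the approximation factors supplied by Theorem~\ref{lb_approx_theorem}. This is a routine plug-in, since all the heavy lifting has already been done in establishing that the $\alphapes{i}$ satisfy~\eqref{approx_factor_def}.

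First I would invoke Theorem~\ref{approx_theorem}, which guarantees that for any algorithm producing a solution $S = \{x_1,\dots,x_n\}$ with approximation factors $\{\alpha_1,\dots,\alpha_n\}$ satisfying~\eqref{approx_factor_def}, we have
\[
f(S) \geq \left(1-e^{-\frac{1}{n}\sum_{i=1}^n \frac{1}{\alpha_i}}\right)f(S^*).
\]
Since the pessimistic algorithm produces such a solution, and Theorem~\ref{lb_approx_theorem} establishes that the $\alphapes{i}$ given in~\eqref{alpha_pess} satisfy~\eqref{approx_factor_def}, I would substitute these factors directly.

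The substitution splits the sum according to the piecewise definition of $\alphapes{i}$. For $i \in \{1,2\}$ we have $\alphapes{i} = 1$, contributing $\tfrac{1}{\alphapes{1}} + \tfrac{1}{\alphapes{2}} = 2$. For $i > 2$, the reciprocal is
\[
\frac{1}{\alphapes{i}} = 1 - \min\{(i-1)\tau_2, 1\},
\]
so the tail of the sum becomes $\sum_{i=3}^n \bigl(1 - \min\{(i-1)\tau_2, 1\}\bigr)$. Collecting these two pieces gives the exponent $\tfrac{1}{n}\bigl(2 + \sum_{i=3}^n (1 - \min\{(i-1)\tau_2,1\})\bigr)$, which is exactly the exponent appearing in~\eqref{pess_corollary}, completing the derivation.

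There is essentially no obstacle here: the content of the result lies in Theorem~\ref{lb_approx_theorem}, not in the corollary. The only point requiring a small check is that the factors $\alphapes{i}$ are well-defined and finite enough for Theorem~\ref{approx_theorem} to apply---in particular that $\tfrac{1}{\alphapes{i}}$ is a valid nonnegative reciprocal. When $(i-1)\tau_2 \geq 1$ the factor $\alphapes{i}$ is formally infinite, but then $\tfrac{1}{\alphapes{i}} = 0$, which is precisely captured by the $\min\{(i-1)\tau_2,1\}$ term evaluating to $1$; so the exponent remains well-defined (indeed nonnegative) and the bound in~\eqref{pess_corollary} holds as stated.
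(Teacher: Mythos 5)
Your proposal is correct and matches the paper's approach exactly: the paper presents Corollary~\ref{lb_corollary} as an immediate consequence of substituting the factors $\alphapes{i}$ from Theorem~\ref{lb_approx_theorem} into the bound of Theorem~\ref{approx_theorem}, which is precisely your plug-in argument. Your additional check that the case $(i-1)\tau_2 \geq 1$ (formally infinite $\alphapes{i}$, reciprocal $0$) is consistent with the $\min\{(i-1)\tau_2,1\}$ term is a sensible detail the paper leaves implicit.
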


Similar to the optimistic algorithm, we see that if $\tau_2$ is small, $\underline{f}(x|S)$ closely represents the true value of $f(x|S)$. 
This bound on performance can be loose relative to the bound produced by Algorithm \ref{algo:posthoc} due to \eqref{lb_curve_lowerbound} being course. The post-hoc bound produced by Algorithm \ref{algo:posthoc} will provide a tighter bound on performance than Corollary \ref{lb_corollary}.

\begin{remark}
The new notions of curvature in  Definitions~\ref{def:k-marginal} and~\ref{total_k_wise_curve_def} are related to the traditional definition of curvature. Let $c$ be the traditional curvature as described in~\eqref{traditional_curvature}, $S\subseteq X$ and $x\in X\backslash S$, both the $k$-marginal and $k$-cardinality curvature have similar inequalities.
\begin{equation}
    c \geq c_k(x|S) \text{ and } c \geq \tau_k.
\end{equation}
There are also scenarios where $c$ can be $1$ and either $c_2(x|S)$ or $\tau_2$ can be small. In Figure \ref{fig:choices}, we see that $\tau_2$ will be small but $c(x|S)$ can be large. Suppose that the disks in Figure~\ref{fig:choices}, have area $1$, and let $x$ be disk 6 and $S$ be disk 5, 7 and 8. Then $c(x|S)$ will be large because, $\bar{f}(x|S) \approx 2/3$ and $f(x|S)$ is small resulting in a larger 2-marginal curvature. As previously described, $\tau_2$ is small in this example because for the disks, $x$ and $y$, that have the most overlap, we have $f(x|y) \approx 2/3$ and $f(x) = 1$. Figure \ref{fig:example_optimisitic_better}, describes the opposite case where $\tau_2\approx 1$ and  $c(x|S) \approx 0$ for any $S\subseteq X$ and $x\in X\backslash S$.
\end{remark}

\subsection{Comparison of Optimistic and Pessimistic Algorithms}
\emph{Tightness of Bounds}:
We can compare the performances of the optimistic and pessimistic algorithms by comparing the exponents in~\eqref{opt_curve_corollary} and~\eqref{pess_corollary}. Note that the algorithms will have the best approximation bound if the exponents evaluate to $-1$. The performance of each algorithm is dependent on the corresponding notions of curvature. For the optimistic algorithm, we wish that the $2$-marginal curvatures are close to zero for each $x_i$ and $S_{i-1}$. For the pessimistic algorithm we instead wish that the $2$-cardinality curvature is close to zero. One, downside that the pessimistic algorithm has is that each term of the sum has the $2$-cardinality curvature multiplied by $(i-1)$. This means that when the cardinality constraint $n$ is large, the $\min\{(i-1)\tau_2,1\}$ will saturate and resulting in the later terms of the sum to be evaluated to $0$, hindering the guaranteed performance of the pessimistic algorithm. Figure~\ref{fig:example_optimisitic_better} depicts an example where the optimistic algorithm will have tighter performance bounds, and Figure~\ref{fig:choices} conversely shows an example where the pessimistic algorithm will have tighter performance bounds provided by on our notions of curvature.

\emph{Assumptions Required}: It is important to note that the pessimistic algorithm requires that the function $f$ possesses the property of supermodularity of conditioning. This is a strong assumption on the functions and limits the number of applications the pessimistic algorithm can be applied. The optimistic algorithm on the other hand can be applied to arbitrary submodular functions. An advantage of the pessimistic algorithm, is that the performance bound is computable using only pairwise information. The performance bounds of the optimistic requires the ability to compute the objective function on sets of arbitrary size.

\emph{Empirical Results:}
As we show in our experimental results in Section~\ref{Simulation_results}, for the particular problem we explore, the pessimistic algorithm tends to out perform the optimistic algorithm in terms of approximation performance. Our experiments by no means show how the algorithms perform in every situation but highlights the potential achievable performance of the two pairwise algorithms.

\section{Time Complexity Comparison of Pairwise Algorithms to Classical Greedy}
\label{time_complexity}
A practical issue with the classical greedy strategy is that for large problems, the strategies are expensive to compute~\cite{LazyGreedy}. It is a common assumption in the literature that we have a \textit{value oracle} for the submodular objective function that is computable in constant time. This assumption leads to a time complexity of $\mathcal{O}(|X|\cdot n)$. In practical applications, the objective must be computed using a polynomial-time algorithm. This would result in a time complexity of $\mathcal{O}(|X|\cdot n\cdot T_{\text{eval}}(n))$ where $T_{\text{eval}}$ is cost of computing $f(S)$ given the size of $S$.

The pairwise algorithms can exploit evaluation of $f(A)$ only on sets $A\subseteq{X}$ and $|A|\leq 2$ to get significant time complexity improvements. The cost computing $f(x|y)$ for $x,y\in X$ is constant because it is not a function of the size of $X$ or $n$. 

Let $S = \{x_1,\dots,x_n\}\subseteq X$ be the set selected by a pairwise algorithm and $S_i = \{x_1,\dots, x_i\}$. The pairwise algorithms obtain their performance gains from the fact we can write $\bar{f}(x|S_{i})$ and $\underline{f}(x|S_{i})$ recursively. For the upper bound we have that for $x\in X$,
\begin{align}
    \bar{f}(x|S_{i}) & = \min_{x_j\in S_i}\{f(x|x_j)\}\nonumber\\
    & = \min\{\min_{x_j\in S_{i-1}}\{f(x|x_j)\},f(x|x_i)\}\nonumber\\
    & = \min\{\bar{f}(x|S_{i-1}),f(x|x_i)\}
    \label{ub_update}
\end{align}
Similarly, for each $x\in X$ for the lower bound have the following,
\begin{align}
    \underline{f}(x|S_{i})  &= f(x) - \sum_{x_j\in S_i}f(x)- f(x|x_j)\nonumber\\
    & = f(x) - \sum_{x_j\in S_{i-1}}f(x)-f(x|x_j) - (f(x)-f(x|x_i))\nonumber\\
    & = \underline{f}(x|S_{i-1}) - (f(x)-f(x|x_{i})).
    \label{lb_update}
\end{align}
Finding $\bar{f}(x|S_i)$ and $\underline{f}(x|S_i)$ can be computed in constant time as a function of $f(x|x_i)$ and $\bar{f}(x|S_{i-1})$ or $\underline{f}(x|S_{i-1})$ respectively. We can leverage this fact to compute each iteration of the pairwise greedy algorithms efficiently. For each iteration of the optimistic algorithm we compute the set  $E_i = \{\bar{f}(x|S_{i-1}): x \in X\backslash S_{i-1}\}$. We compute $x_i$ by finding the element with maximum value in $E$. We can compute $E_{i+1}$ via the recursive definition \eqref{ub_update} which can be done in linear time. Since we can compute both $E_{i+1}$ and the maximum of $E_{i}$ in linear time with respect to the size of $X$, the cost of each iteration of the algorithm is $\mathcal{O}(|X|)$. The exact same procedure can be done for the pessimistic algorithm. Therefore both the pairwise algorithms have time complexity of exactly $\mathcal{O}(|X|\cdot n)$.

The efficiency of the pairwise algorithms introduces a new trade off for practical applications of submodular maximization. The pairwise algorithms can avoid paying the cost $T_{\text{eval}}(n)$ while computing each marginal return. We can trade off approximation performance guarantees for Problem \ref{basicproblem} for execution speed improvements. This can be useful in scenarios where a user needs to repeatedly and quickly obtain approximate solutions to a submodular maximization problem, and are not that sensitive to the quality of the solution.  As we will show in our experimental results, the execution time improvements in using the pairwise algorithms can be significant while simultaneously still providing relatively strong approximation performance. If the function being maximized has favorable curvature conditions then the losses in the guaranteed performance from using the pairwise algorithms can be minimal.  

\section{Simulation Results}
\label{Simulation_results}
In this section, we benchmark the proposed algorithms in a simulated application of providing autonomous ride service in New York City utilizing electric vehicles. We focus on a coverage problem of selecting a set of charging locations for vehicles for which they can best respond to customer demand. From a historical data set provided by the NYC Taxi \& Limousine Commission \cite{NYCDATA}, we know that throughout the day the geographical distribution of customer demand is changing. 

\begin{figure}[ht]
    \centering
    \includegraphics[width = 0.3\textwidth]{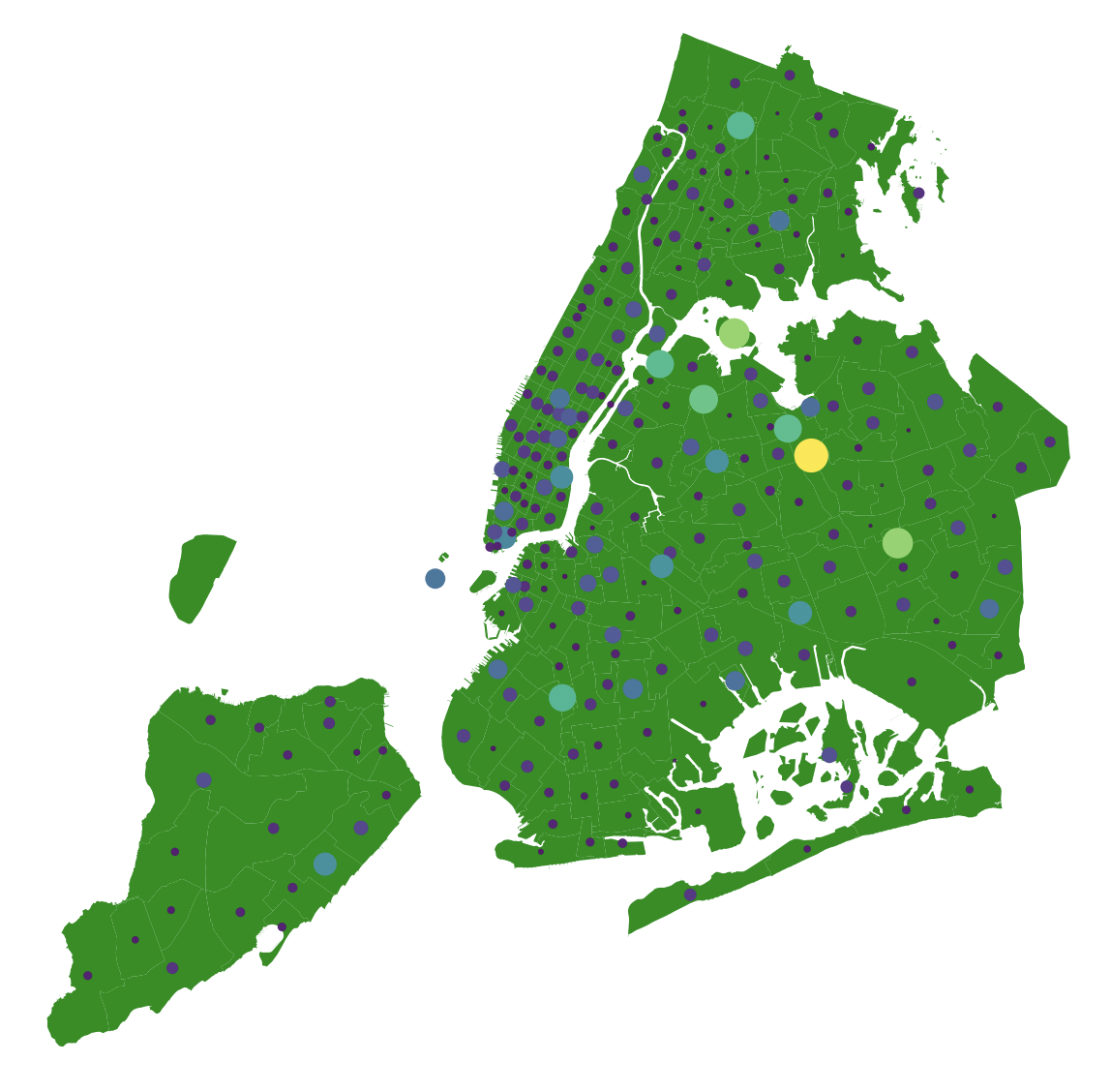}
    \caption{Geographical visualization of taxi customer data collected on January 1, 2020. There are 263 districts and size of the dots in each districts are proportional to the number of pick ups that occurred in the district. }
    \label{fig:nyc_data}
\end{figure}

New York City is split into 263 Taxi districts and we assume that the charging stations are located at the centroids of each of these districts. We wish to select a subset of charging locations that maximize the expected customer demand that can efficiently serviced from these locations.  We say a customer can be efficiently serviced if it can be picked up with a delay of at most $t$ minutes. Let $\mathcal{E}$ be the set of districts and let $X$ be the set of stations. Let $p_x^e$ be the probability that a vehicle deployed from station $x\in X$ can pick up a passenger in district $e\in \mathcal{E}$ in $t$-minutes. For simplicity we assume that the ride requests originate from the centroids of the districts. Finally, let $v_e$ be the demand in district $e$, which is modeled as the estimated number of pick up requests in district $e$ in a specified time interval, based on historical data.

Let $S\subseteq X$ be a set of stations. Then the objective we want to maximize, which we will call the hidden objective function $f_h$, is written as follows:
\begin{equation}
f_h(S)=\sum_{e \in \mathcal{E}}\left(\left(1-\prod_{x \in S}(1-p_{x}^{e})\right)v_{e}\right).
\end{equation}
The hidden objective function is more formally known as the probabilistic coverage function and was used for a related sensor coverage problem in \cite{distsubmaxpartition}. 

Suppose, we do not have access to the entire hidden objective function due to computation and/or modelling challenges, and thus the optimization is solved using only pairwise information. Given the pairwise information constraint, we can compute the expected demand that can be serviced by a single station and a pair of stations as,
$$f(x) = \sum_{e\in \mathcal{E}}p_x^e v_e$$
and 
$$f(x,y) = \sum_{e\in \mathcal{E}}(1-(1-p_x^e)(1-p_y^e))v_e.$$

We will model $p_{x}^e$ using a Gaussian Kernel function $$p_x^e = e^{-\frac{d(x,e)^2}{r_s^2}},$$
where $d:\real^2\times\real^2 \rightarrow \real_{+}$ is a distance metric, and $r_s$ is a tune-able parameter that dictates the range of distances where a vehicle could be quickly deployed to service a ride. For our experiments, we used the Euclidean distance metric for simplicity but the metric could be changed to better model the real system.

The objective function is a normalized monotone submodular function that exhibits supermodularity of conditioning \cite{distsubmaxpartition}, which allows us to apply all of our results. To decide which stations should be selected during different time intervals throughout the day, we estimate $v_e$ for each $e\in \mathcal{E}$ from the historical data and then attempt to solve $$S^* \in \argmax_{S\subseteq X, |S| \leq n}f_h(S).$$ 

We compare the optimistic and pessimistic algorithms ability to maximize $f_h$ while only given access to $f(x)$ and $f(x,y)$ for $x,y\in X$, to the full information greedy algorithm with full access to $f_h$. We compute the true objective $f_h(S)$ for the pairwise algorithms to compare against the full information greedy algorithm's performance.

\subsection{Approximation Performance Experimental Results}

To determine how the strategies of the optimistic and pessimistic strategies perform using historical data for ride services in New York City. The data set used included the pick-up times and locations of all the ``For Hire Vehicle" rides in the month of January 2020~\cite{NYCDATA}. We tested the performance of algorithms on varying distributions, we split the data up by pick-up time. We made twelve sections each corresponding to a unique two-hour window of the day. For each of the subsets, we estimated values $v_e$ for each district by taking the average number of rides in the district. We executed the three algorithms on each of the twelve sets to compare results, which are summarized in Figure~\ref{fig:classical_vs_lb_and_ub}. 
\begin{figure}[ht]
    \centering
    \includegraphics[width = 0.45\textwidth]{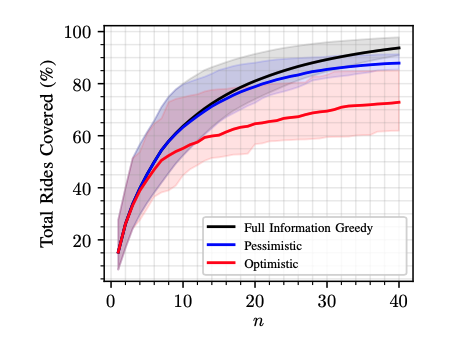}
    \caption{Experimental results comparing the performance of the three algorithms. The plot looks at percentage of total rides covered after selecting $n$ stations averaged over the twelve experiments. }
    \label{fig:classical_vs_lb_and_ub}
\end{figure}

\begin{figure}[ht]
    \centering
    \includegraphics[width = 0.45\textwidth]{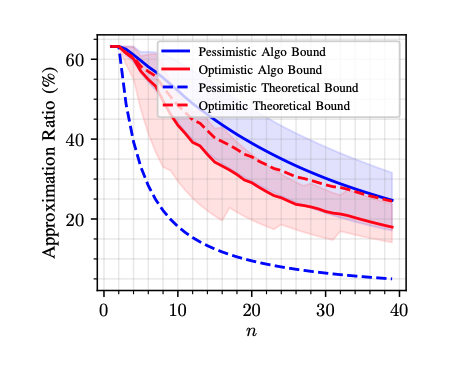}
    \caption{The estimated worst case lower bounds on performance of the optimistic and pessimistic algorithms. The bounds with solid lines were computed using Algorithm \ref{algo:posthoc}. The coloured filled sections show the maximum and minimum values of the bounds over all of the trials produced by Algorithm \ref{algo:posthoc}. The dashed lines are the average worst case bound over the trials computed using Corollaries \ref{ub_corollary_} and \ref{lb_corollary}.}
    \label{fig:estimated_bounds_lb_vs_ub}
\end{figure}
Figure~\ref{fig:classical_vs_lb_and_ub} shows the performance for different numbers of charging stations selected. We see that all three algorithms have similar performance for low values of $n$ but then begin to diverge after 15 stations are selected. The pessimistic strategy performs significantly better than the optimistic strategy. The pessimistic algorithm yielded a value no worse than 90\% of the full information greedy algorithm's value and the optimistic yielded a value no worse than 67\% across all trials. 
\begin{figure}[ht]
    \centering
    \includegraphics[width = 0.45\textwidth]{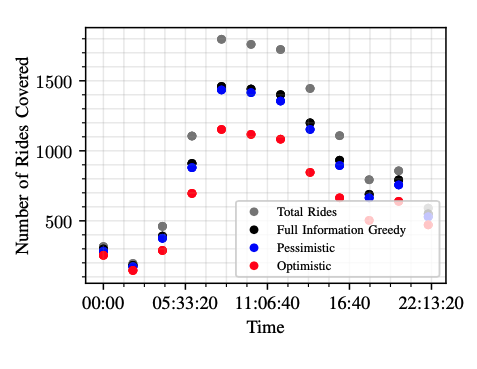}
    \caption{The number of rides covered by each algorithm during each two hour time interval of the day with n = 25 as well as the total possible rides covered at the time.}
    \label{fig:time_of_day}
\end{figure}

Figure~\ref{fig:estimated_bounds_lb_vs_ub} compares the worst-case lower bounds of the optimistic and pessimistic algorithms computed from the estimated approximation factors in \eqref{post_hoc_alpha} on the same trials as used in Figure \ref{fig:classical_vs_lb_and_ub} as well as the worst case performance bounds produced by Corollaries \ref{ub_corollary_} and \ref{lb_corollary}. As the number of stations selected increases, the lower bounds on performance degrade in all cases. For the pessimistic algorithm, this is due to the fact that the lower bounds on the marginals also exhibit diminishing returns and continually selecting elements maximizing the lower bound drives the denominator of \eqref{post_hoc_alpha} down. This causes high values of the estimated $\alpha_1,\dots,\alpha_n$ and decreasing approximation bounds. The optimistic algorithm does not actively minimize the estimated approximation factors, which are reflected in both the percentage of rides covered and approximation bounds. As observed in Figure~\ref{fig:classical_vs_lb_and_ub}, we see that both pairwise algorithms are still performing similarly to the full information algorithm even though the lower bounds in Figure~\ref{fig:estimated_bounds_lb_vs_ub} suggests otherwise. We also see that the theoretical performance bound for the optimistic algorithm is relatively close to the bound produced by Algorithm \ref{algo:posthoc} for the pessimistic algorithm. The pessimistic algorithm's theoretical bound is much lower than the rest of the bounds, due to the fact that the average $\tau_2$ for each trial was near $1$ with a value of $0.89$. This highlights the benefits of utilizing Algorithm \ref{algo:posthoc} when computing performance bounds for the pessimistic algorithm.

This experiment reveals that the bound produced by Algorithm \ref{algo:posthoc} becomes less accurate as $n$ increases. This is due to the fact that the approximation factors measure the multiplicative difference from the true greedy choices. The true marginals for elements selected near the end of execution tend to be smaller. Thus, the difference in the overall objective values could be small, but the multiplicative difference could be large which is reflected in the lower bound on performance.

Figure~\ref{fig:time_of_day} looks at the performance of the algorithms over different subsets of the historical data. We plotted the objective value for 25 stations using each of the algorithms for each of the twelve subsets of data. We also plotted the max value of the objective function each of the algorithms could possibly achieve in the time frame. From Figure \ref{fig:time_of_day}, we see that for each time frame, the pessimistic algorithm is essentially as effective as the full information greedy algorithm but the optimistic greedy algorithm is less effective. This shows that our function $f$ is not an example of a function where the performance is near the universal lower bound described in Theorem \ref{approx_theorem}:  Both algorithms are performing near the full information greedy strategy which is at least $63$\% of optimal.  This is far from approximately 8\% of optimal as dictated by Theorem \ref{approx_theorem}.

\subsection{Time Complexity Experimental Results} 
Using the same data used for the performance experiments, we measured the time efficiency of the three algorithms as well. We measured the execution time of the algorithms on each of the 12 data subsets and plotted the average execution time in terms of $n$. Each of the experiments was computed using Python 3.7 on a 2017 Macbook Pro with a 3.1 GHz Dual-Core Intel i5 with 8 GB 2133 MHz LPDDR3 RAM.

\begin{figure}[ht]
    \centering
    \includegraphics[width = 0.45\textwidth]{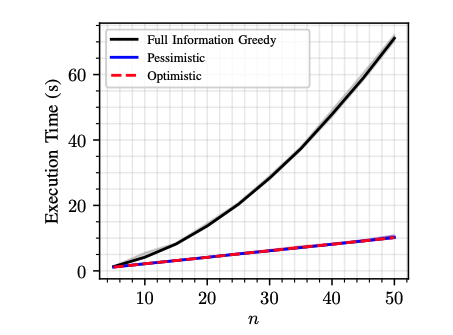}
    \caption{Execution time of algorithms as number of stations selected increases. For each $n$, the algorithms execution times were recorded on each of the 12 subsets of data and then averaged over 5 trials. The coloured filled sections are the maximum and minimum execution for each of the algorithms for each $n$.}
    \label{fig:execution_time}
\end{figure}

\begin{figure}[ht]
    \centering
    \includegraphics[width = 0.45\textwidth]{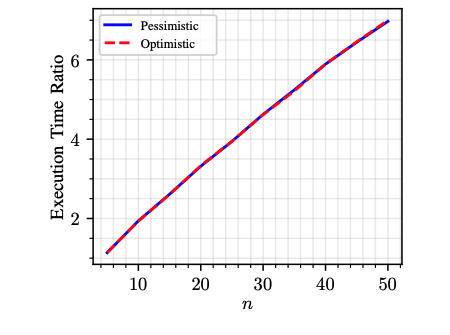}
    \caption{Plot of the ratios of execution time of the greedy strategy to the execution time of pairwise algorithms as number of stations selected increases. For each $n$, the average execution time of the the greedy algorithm was divided by the execution time of the algorithms.}
    \label{fig:execution_ratio}
\end{figure}
Figure~\ref{fig:execution_time} summarizes the results from the execution time experiment. For each trial, the size of $|X|$ was the same. The relationship between the execution times of the pairwise algorithms and the value of $n$ is linear. This relationship is as expected given our time complexity analysis in Section \ref{time_complexity}. Using the pairwise greedy strategies and the implementation details from Section~\ref{time_complexity}, reduces the time complexity from quadratic to linear time in terms of $n$. The pessimistic and optimistic algorithms share similar execution times which resulted in the two lines overlapping.

In Figure \ref{fig:execution_ratio}, we see that for both pairwise algorithms the ratio of the execution time of the full information greedy algorithm and the pairwise algorithms is almost linear. This verifies that the pairwise algorithms result in a reduction in time complexity by almost a factor of $n$ for this particular objective function.

\section{Conclusions and Future Work}

We have studied the problem of submodular maximization when an algorithm has limited access to the objective function. In general, without any additional assumptions about the function an algorithm cannot guarantee better performance than $k/n$ given $k$-wise information about the objective. We proposed two strategies that utilizes only pairwise information, which have performance guarantees dependent on new notions of curvature. We also provided a method to measure the performance of algorithms with limited access to the objective function in hindsight under the assumption that the function possesses supermodularity of conditioning. The two proposed strategies can be computed efficiently and provide new trade offs between approximation performance and time complexity. We also provided experimental results highlighting the performance of both algorithms.

In future work, we plan on extending these results to scenarios in distributed systems where a team of agents can only observe a subset of the other agents' decisions as well as partially observe the objective values. We also are exploring how $k$-wise information can be leveraged in similar style as pairwise lower bound to produce new limited information greedy strategies.


\newpage
\section{Appendix}
The following is a proof of Theorem \ref{approx_theorem}.
\begin{proof}
Let $S^*$ be the solution to Problem \ref{basicproblem}. Recall Proposition 2.1 from \cite{Nemhauser} that $f$ is a monotone submodular set function on $X$ if and only if $f(T) \leq f(S) + \sum_{x_j\in T\backslash S}f(x_j|S)$ for all $S,T\subseteq X$. If $S$ is empty then we have 
\begin{equation}
     f(S^*) \leq \sum_{x_i^*\in S^*}f(x_i^*) \leq nf(x_1^*)\leq n\alpha_1f(x_1)\label{g_lb_theorem:1}.
\end{equation}
Then if we apply the lemma again with $S_j$, we have 
\begin{align}
    f(S^*) \leq f(S_j) +\sum_{x_j^*\in S^*\backslash S_j}f(x_j^*|S_j).\label{g_lb_theorem:2}
\end{align}
We also know that
\begin{align}
    \alpha_{j+1}f(x_{j+1}|S_j)\geq \max_{x\in X\backslash S_j}f(x|S_j)\geq f(x_j^*|S_j).\label{g_lb_theorem:3}
\end{align}
We now substitute equation \eqref{g_lb_theorem:3} into equation \eqref{g_lb_theorem:2} and write $f(S_j)$ as sum of it's marginals to get 
\begin{align}
     f(S^*) &\leq f(S_j) +\sum_{x_j^*\in S^*\backslash S_j}\alpha_{j+1}f(x_{j+1}|S_j)\nonumber\\
     &\leq \sum_{i=1}^{j}f(x_i|S_{i-1}) +n\alpha_{j+1}f(x_{j+1}|S_j) \label{g_lb_theorem:4}
\end{align}
Equation \eqref{g_lb_theorem:4} holds since $|S^*\backslash S_j|\leq n$. We will now rearrange equation \eqref{g_lb_theorem:4} to get the following
\begin{equation}
    f(x_{j+1}|S_j)\geq \frac{1}{\alpha_{j+1}n}f(S^*) - \frac{1}{\alpha_{j+1}n}\sum_{i=1}^{j}f(x_i|S_{i-1})\label{g_lb_theorem:5}.
\end{equation}
Now we will add $\sum_{i = 1}^jf(x_i|S_{i-1})$ to both sides of equation \eqref{g_lb_theorem:5} and simplify
\begin{align}
    \sum_{i=1}^{j+1}f(x_{i}|S_{i-1}) \label{g_lb_theorem:6}\geq \frac{1}{\alpha_{j+1}k}&f(S^*) \nonumber\\
    & + \frac{\alpha_{j+1}k-1}{\alpha_{j+1}k}\sum_{i=1}^{j}f(x_i|S_{i-1}) 
\end{align}
We will now prove by induction on $j$ that 

\begin{align}
    \sum_{i=1}^{j}f(x_{i}|S_{i-1})\geq \frac{\prod_{i=1}^{j}(\alpha_in)-\prod_{i=1}^{j}(\alpha_in-1)}{\prod_{i=1}^{j}(\alpha_in)}f(S^*)\nonumber
\end{align}

For base case $j = 1$ we will apply equation \eqref{g_lb_theorem:1} to get
\begin{align}
        f(x_1)& \geq \frac{1}{\alpha_1n}f(S^*) \nonumber
\end{align}
proving the base case. Now assuming the claim holds for $j-1$. We will apply the inductive hypothesis to equation \eqref{g_lb_theorem:6}, 
\begin{align}
    \sum_{i=1}^{j}&f(x_{i}|S_{i-1})\geq \frac{1}{\alpha_{j}n}f(S^*) \nonumber\\
    &+\frac{\alpha_{j}n-1}{\alpha_{j}n}\cdot\frac{\prod_{i=1}^{j-1}(\alpha_in)-\prod_{i=1}^{j-1}(\alpha_in-1)}{\prod_{i=1}^{j-1}(\alpha_in)}f(S^*)
\end{align}
Then after rearranging, we arrive at 
\begin{align}
    f(S_j)&\geq \frac{\prod_{i=1}^{j}(\alpha_in)-\prod_{i=1}^{j}(\alpha_in-1)}{\prod_{i=1}^{j}(\alpha_in)}f(S^*)\nonumber
\end{align}
proving the inductive hypothesis. If we take $j = n$, we arrive at
\begin{align}
    f(S_n)&\geq \frac{\prod_{i=1}^{n}(\alpha_in)-\prod_{i=1}^{n}(\alpha_in-1)}{\prod_{i=1}^{n}(\alpha_in)}f(S^*). \nonumber
\end{align}
We will now lower bound right coefficients on $f(S^*)$ to simplify the bound. We can now cancel out the denominator of the coefficient to get.
\begin{align}
   \frac{\prod_{i=1}^{n}(\alpha_in)-\prod_{i=1}^{n}(\alpha_in-1)}{\prod_{i=1}^{n}(\alpha_in)} & = 1-\prod_{i=1}^n\frac{\alpha_i n-1}{\alpha_i n}\\
   & = 1-\prod_{i=1}^n\left(1 -\frac{1}{\alpha_i n}\right)
\end{align}
We can now upper bound each term in the product using $1+x \leq e^x$ with $x = \frac{1}{\alpha_i n}$ to get a lower bound.
\begin{align}
    1-\prod_{i=1}^n\left(1 -\frac{1}{\alpha_i n}\right) &\geq 1-\prod_{i=1}^ne^{-\frac{1}{\alpha_i n}}\\
    &= 1-e^{-\frac{1}{n}\sum_{i=1}^n\frac{1}{\alpha_i}}
\end{align}
Using this lower bound yields our result.
\end{proof}

Proof of Theorem \ref{k_wise_ub_approx_theorem}
\begin{proof}
Let $x_i^g$ be the true greedy choice at iteration $i$ given $S_{i-1}$. For $i\leq k$ we have that $x_i = x_i^g$ by the definition of $\bar{f}_k(x_i|S_{i-1})$. Therefore we have, $\alphaoptk{1} = \dots = \alphaoptk{k} = 1$. The minimum possible approximation factor we have can be written as 
\begin{equation}
    \alpha_i^{\min} = \frac{f(x_i^g|S_{i-1})}{f(x_i|S_{i-1})}
\end{equation}
Any approximation factor $\alpha_i$ such that $\alpha_i\geq \alpha_i^{\min}$ will satisfy equation \eqref{approx_factor_def}.  We will now upper bound $\alpha_i^{\min}$ as follows.

\begin{align}
    \alpha_i^{\min} &= \frac{f(x_i^g|S_{i-1})}{f(x_i|S_{i-1})}\nonumber\\
    &\leq \frac{\bar{f}_k(x_i^g|S_{i-1})}{f(x_i|S_{i-1})}\label{k_by_ub_def}\\
    &\leq \frac{\bar{f}_k(x_i|S_{i-1})}{f(x_i|S_{i-1})} \label{k_by_greedy_choice}
\end{align}
Where equation \eqref{k_by_ub_def} holds by the definition of the upper bound. Equation \eqref{k_by_greedy_choice} holds by the greedy choice property of the $k$-wise optimistic algorithm. Therefore the right hand side of equation \eqref{k_by_greedy_choice} is a valid approximation factor. We set $\alphaoptk{i}$ to the right hand side of equation \eqref{k_by_greedy_choice} we conclude our proof.
\end{proof}

Proof of Theorem \ref{lb_approx_theorem}.
\begin{proof}
Let $x_i^g$ be the true greedy choice at iteration $i$ given $S_{i-1}$. For $i = \{1,2\}$, we have that 
\[
\underline{f}(x|S_{i-1}) = f(x|S_{i-1}).
\]
Hence, $x_i = x_i^g$ and therefore, we can let $\alphapes{1} = \alphapes{2} = 1$. The minimum approximation factor that we can achieve can be written as follows.
\begin{align}
    \alpha_i^{\min} &=\frac{f(x_i^g|S_{i-1})}{f(x_i|S_{i-1})}\nonumber\\
    &\leq \frac{f(x_i^g|S_{i-1})}{\underline{f}(x_i|S_{i-1})}\label{by_lb_def}\\
    &\leq \frac{f(x_i^g|S_{i-1})}{\underline{f}(x_i^g|S_{i-1})}\label{greedy_choice}\\
    &\leq \frac{f(x_i^g|S_{i-1})}{f(x_i^g)(1-\min\{(i-1)\tau_2,1\})}\label{curv_lb}\\
    &\leq \frac{1}{1-\min\{(i-1)\tau_2,1\}}\label{pess_factor}
\end{align}
Where \eqref{by_lb_def} holds by the definition of the lower bound, \eqref{greedy_choice} holds by the greedy choice property of the pessimistic strategy and finally \eqref{curv_lb} holds by  \eqref{lb_curve_lowerbound}. Therefore if we set $\alphapes{i}$ to the right hand side of equation \eqref{pess_factor}, then $\alphapes{i}$ is a valid approximation factor for Theorem \ref{approx_theorem}.
\end{proof}


\begin{thebibliography}{10}
\providecommand{\url}[1]{#1}
\csname url@samestyle\endcsname
\providecommand{\newblock}{\relax}
\providecommand{\bibinfo}[2]{#2}
\providecommand{\BIBentrySTDinterwordspacing}{\spaceskip=0pt\relax}
\providecommand{\BIBentryALTinterwordstretchfactor}{4}
\providecommand{\BIBentryALTinterwordspacing}{\spaceskip=\fontdimen2\font plus
\BIBentryALTinterwordstretchfactor\fontdimen3\font minus
  \fontdimen4\font\relax}
\providecommand{\BIBforeignlanguage}[2]{{%
\expandafter\ifx\csname l@#1\endcsname\relax
\typeout{** WARNING: IEEEtran.bst: No hyphenation pattern has been}%
\typeout{** loaded for the language `#1'. Using the pattern for}%
\typeout{** the default language instead.}%
\else
\language=\csname l@#1\endcsname
\fi
#2}}
\providecommand{\BIBdecl}{\relax}
\BIBdecl

\bibitem{submodularvoltagecontrol}
Z.~{Liu}, A.~{Clark}, P.~{Lee}, L.~{Bushnell}, D.~{Kirschen}, and
  R.~{Poovendran}, ``Submodular optimization for voltage control,'' \emph{IEEE
  Transactions on Power Systems}, vol.~33, no.~1, pp. 502--513, 2018.

\bibitem{Submodular_Power_Storage}
J.~{Qin}, I.~{Yang}, and R.~{Rajagopal}, ``Submodularity of storage placement
  optimization in power networks,'' \emph{IEEE Transactions on Automatic
  Control}, vol.~64, no.~8, pp. 3268--3283, 2019.

\bibitem{UAV_Trajectory}
M.~{Roberts}, S.~{Shah}, D.~{Dey}, A.~{Truong}, S.~{Sinha}, A.~{Kapoor},
  P.~{Hanrahan}, and N.~{Joshi}, ``Submodular trajectory optimization for
  aerial 3d scanning,'' in \emph{2017 IEEE International Conference on Computer
  Vision (ICCV)}, 2017, pp. 5334--5343.

\bibitem{SubmodularInformationGathering}
\BIBentryALTinterwordspacing
A.~Krause and C.~Guestrin, ``Submodularity and its applications in optimized
  information gathering,'' \emph{ACM Trans. Intell. Syst. Technol.}, vol.~2,
  no.~4, Jul. 2011. [Online]. Available:
  \url{https://doi.org/10.1145/1989734.1989736}
\BIBentrySTDinterwordspacing

\bibitem{DataSetSelection}
\BIBentryALTinterwordspacing
K.~Wei, R.~Iyer, and J.~Bilmes, ``Submodularity in data subset selection and
  active learning,'' in \emph{Proceedings of the 32nd International Conference
  on Machine Learning}, ser. Proceedings of Machine Learning Research, F.~Bach
  and D.~Blei, Eds., vol.~37.\hskip 1em plus 0.5em minus 0.4em\relax Lille,
  France: PMLR, 07--09 Jul 2015, pp. 1954--1963. [Online]. Available:
  \url{http://proceedings.mlr.press/v37/wei15.html}
\BIBentrySTDinterwordspacing

\bibitem{MutliDocumentSummarization}
\BIBentryALTinterwordspacing
H.~Lin and J.~Bilmes, ``Multi-document summarization via budgeted maximization
  of submodular functions,'' in \emph{Human Language Technologies: The 2010
  Annual Conference of the North {A}merican Chapter of the Association for
  Computational Linguistics}.\hskip 1em plus 0.5em minus 0.4em\relax Los
  Angeles, California: Association for Computational Linguistics, Jun. 2010,
  pp. 912--920. [Online]. Available:
  \url{https://www.aclweb.org/anthology/N10-1134}
\BIBentrySTDinterwordspacing

\bibitem{ProbabilisticSubmodularLinearTime}
\BIBentryALTinterwordspacing
S.~Stan, M.~Zadimoghaddam, A.~Krause, and A.~Karbasi, ``Probabilistic
  submodular maximization in sub-linear time,'' in \emph{Proceedings of the
  34th International Conference on Machine Learning}, ser. Proceedings of
  Machine Learning Research, D.~Precup and Y.~W. Teh, Eds., vol.~70.\hskip 1em
  plus 0.5em minus 0.4em\relax International Convention Centre, Sydney,
  Australia: PMLR, 06--11 Aug 2017, pp. 3241--3250. [Online]. Available:
  \url{http://proceedings.mlr.press/v70/stan17a.html}
\BIBentrySTDinterwordspacing

\bibitem{EntropyGreedy}
D.~Sharma, A.~Kapoor, and A.~Deshpande, ``On greedy maximization of entropy,''
  in \emph{International Conference on Machine Learning}, 2015, pp. 1330--1338.

\bibitem{SubmodularStreaming}
\BIBentryALTinterwordspacing
A.~Badanidiyuru, B.~Mirzasoleiman, A.~Karbasi, and A.~Krause, ``Streaming
  submodular maximization: Massive data summarization on the fly,'' in
  \emph{Proceedings of the 20th ACM SIGKDD International Conference on
  Knowledge Discovery and Data Mining}, ser. KDD '14.\hskip 1em plus 0.5em
  minus 0.4em\relax New York, NY, USA: Association for Computing Machinery,
  2014, p. 671–680. [Online]. Available:
  \url{https://doi.org/10.1145/2623330.2623637}
\BIBentrySTDinterwordspacing

\bibitem{distsubmaxpartition}
M.~{Corah} and N.~{Michael}, ``Distributed submodular maximization on partition
  matroids for planning on large sensor networks,'' in \emph{2018 IEEE
  Conference on Decision and Control (CDC)}, 2018, pp. 6792--6799.

\bibitem{SensorPlacementWaterNetworks}
A.~Krause, J.~Leskovec, C.~Guestrin, J.~VanBriesen, and C.~Faloutsos,
  ``Efficient sensor placement optimization for securing large water
  distribution networks,'' \emph{Journal of Water Resources Planning and
  Management}, vol. 134, no.~6, pp. 516--526, 2008.

\bibitem{Nemhauser}
\BIBentryALTinterwordspacing
G.~L. Nemhauser, L.~A. Wolsey, and M.~L. Fisher, ``An analysis of
  approximations for maximizing submodular set functions---i,''
  \emph{Mathematical Programming}, vol.~14, no.~1, pp. 265--294, 1978.
  [Online]. Available: \url{https://doi.org/10.1007/BF01588971}
\BIBentrySTDinterwordspacing

\bibitem{SubmodularSurvey}
A.~Krause and D.~Golovin, ``Submodular function maximization.''
  \emph{Tractability}, vol.~3, pp. 71--104, 2014.

\bibitem{SubmodularSetCover}
\BIBentryALTinterwordspacing
U.~Feige, ``A threshold of ln n for approximating set cover,'' \emph{J. ACM},
  vol.~45, no.~4, p. 634–652, Jul. 1998. [Online]. Available:
  \url{https://doi.org/10.1145/285055.285059}
\BIBentrySTDinterwordspacing

\bibitem{SubmodularKnapsackConstraints}
R.~Iyer and J.~Bilmes, ``Submodular optimization with submodular cover and
  submodular knapsack constraints,'' in \emph{Proceedings of the 26th
  International Conference on Neural Information Processing Systems - Volume
  2}, ser. NIPS'13.\hskip 1em plus 0.5em minus 0.4em\relax Red Hook, NY, USA:
  Curran Associates Inc., 2013, p. 2436–2444.

\bibitem{MatroidConstriant}
\BIBentryALTinterwordspacing
G.~Calinescu, C.~Chekuri, M.~Pál, and J.~Vondrák, ``Maximizing a monotone
  submodular function subject to a matroid constraint,'' \emph{SIAM Journal on
  Computing}, vol.~40, no.~6, pp. 1740--1766, 2011. [Online]. Available:
  \url{https://doi.org/10.1137/080733991}
\BIBentrySTDinterwordspacing

\bibitem{Robust_correlated}
Q.~{Hou} and A.~{Clark}, ``Robust maximization of correlated submodular
  functions under cardinality and matroid constraints,'' \emph{IEEE
  Transactions on Automatic Control}, pp. 1--1, 2021.

\bibitem{weak_submodular}
A.~{Hashemi}, M.~{Ghasemi}, H.~{Vikalo}, and U.~{Topcu}, ``Randomized greedy
  sensor selection: Leveraging weak submodularity,'' \emph{IEEE Transactions on
  Automatic Control}, vol.~66, no.~1, pp. 199--212, 2021.

\bibitem{Kalman_sensor}
L.~F.~O. Chamon, G.~J. Pappas, and A.~Ribeiro, ``The mean square error in
  kalman filtering sensor selection is approximately supermodular,'' in
  \emph{2017 IEEE 56th Annual Conference on Decision and Control (CDC)}, 2017,
  pp. 343--350.

\bibitem{submod_sensor_scheduling}
\BIBentryALTinterwordspacing
S.~T. Jawaid and S.~L. Smith, ``Submodularity and greedy algorithms in sensor
  scheduling for linear dynamical systems,'' \emph{Automatica}, vol.~61, pp.
  282--288, 2015. [Online]. Available:
  \url{https://www.sciencedirect.com/science/article/pii/S0005109815003489}
\BIBentrySTDinterwordspacing

\bibitem{multi_robot}
M.~Corah and N.~Michael, ``Distributed matroid-constrained submodular
  maximization for multi-robot exploration: theory and practice,''
  \emph{Autonomous Robots}, vol.~43, no.~2, pp. 485 -- 501, February 2019.

\bibitem{resilient_target_tracking}
L.~Zhou, V.~Tzoumas, G.~J. Pappas, and P.~Tokekar, ``Resilient active target
  tracking with multiple robots,'' \emph{IEEE Robotics and Automation Letters},
  vol.~4, no.~1, pp. 129--136, 2019.

\bibitem{inform_path_planning}
\BIBentryALTinterwordspacing
S.~T. Jawaid and S.~L. Smith, ``Informative path planning as a maximum
  traveling salesman problem with submodular rewards,'' \emph{Discrete Applied
  Mathematics}, vol. 186, pp. 112--127, 2015. [Online]. Available:
  \url{https://www.sciencedirect.com/science/article/pii/S0166218X15000050}
\BIBentrySTDinterwordspacing

\bibitem{input_selection}
D.~Sahabandu, A.~Clark, L.~Bushnell, and R.~Poovendran, ``Submodular input
  selection for synchronization in kuramoto networks,'' in \emph{2020 59th IEEE
  Conference on Decision and Control (CDC)}, 2020, pp. 5840--5847.

\bibitem{LimitedInformation}
B.~{Gharesifard} and S.~L. {Smith}, ``Distributed submodular maximization with
  limited information,'' \emph{IEEE Transactions on Control of Network
  Systems}, vol.~5, no.~4, pp. 1635--1645, 2018.

\bibitem{ImpactOfInfromation}
D.~{Grimsman}, M.~S. {Ali}, J.~P. {Hespanha}, and J.~R. {Marden}, ``The impact
  of information in distributed submodular maximization,'' \emph{IEEE
  Transactions on Control of Network Systems}, vol.~6, no.~4, pp. 1334--1343,
  2019.

\bibitem{MessagePassing}
D.~{Grimsman}, M.~R. {Kirchner}, J.~P. {Hespanha}, and J.~R. {Marden}, ``The
  impact of message passing in agent-based submodular maximization,'' in
  \emph{2020 59th IEEE Conference on Decision and Control (CDC)}, 2020, pp.
  530--535.

\bibitem{ParrellelExecution}
H.~{Sun}, D.~{Grimsman}, and J.~R. {Marden}, ``Distributed submodular
  maximization with parallel execution,'' in \emph{2020 American Control
  Conference (ACC)}, 2020, pp. 1477--1482.

\bibitem{revisiting}
P.~R. Goundan and A.~S. Schulz, ``Revisiting the greedy approach to submodular
  set function maximization,'' \emph{Optimization online}, pp. 1--25, 2007.

\bibitem{lazierthanlazy}
\BIBentryALTinterwordspacing
B.~Mirzasoleiman, A.~Badanidiyuru, A.~Karbasi, J.~Vondrak, and A.~Krause,
  ``Lazier than lazy greedy,'' \emph{Proceedings of the AAAI Conference on
  Artificial Intelligence}, vol.~29, no.~1, Feb. 2015. [Online]. Available:
  \url{https://ojs.aaai.org/index.php/AAAI/article/view/9486}
\BIBentrySTDinterwordspacing

\bibitem{ScalingViaPrunedSubmodularityGraphs}
\BIBentryALTinterwordspacing
T.~Zhou, H.~Ouyang, J.~Bilmes, Y.~Chang, and C.~Guestrin, ``{Scaling Submodular
  Maximization via Pruned Submodularity Graphs},'' in \emph{Proceedings of the
  20th International Conference on Artificial Intelligence and Statistics},
  ser. Proceedings of Machine Learning Research, A.~Singh and J.~Zhu, Eds.,
  vol.~54.\hskip 1em plus 0.5em minus 0.4em\relax Fort Lauderdale, FL, USA:
  PMLR, 20--22 Apr 2017, pp. 316--324. [Online]. Available:
  \url{http://proceedings.mlr.press/v54/zhou17a.html}
\BIBentrySTDinterwordspacing

\bibitem{DistributedSubmodularMaximization}
\BIBentryALTinterwordspacing
B.~Mirzasoleiman, A.~Karbasi, R.~Sarkar, and A.~Krause, ``Distributed
  submodular maximization,'' \emph{Journal of Machine Learning Research},
  vol.~17, no. 235, pp. 1--44, 2016. [Online]. Available:
  \url{http://jmlr.org/papers/v17/mirzasoleiman16a.html}
\BIBentrySTDinterwordspacing

\bibitem{StreamingPlusPlus}
E.~Kazemi, M.~Mitrovic, M.~Zadimoghaddam, S.~Lattanzi, and A.~Karbasi,
  ``Submodular streaming in all its glory: Tight approximation, minimum memory
  and low adaptive complexity,'' in \emph{International Conference on Machine
  Learning}.\hskip 1em plus 0.5em minus 0.4em\relax PMLR, 2019, pp. 3311--3320.

\bibitem{Matriod_curvature}
\BIBentryALTinterwordspacing
T.~Friedrich, A.~Göbel, F.~Neumann, F.~Quinzan, and R.~Rothenberger, ``Greedy
  maximization of functions with bounded curvature under partition matroid
  constraints,'' \emph{Proceedings of the AAAI Conference on Artificial
  Intelligence}, vol.~33, no.~01, pp. 2272--2279, Jul. 2019. [Online].
  Available: \url{https://ojs.aaai.org/index.php/AAAI/article/view/4065}
\BIBentrySTDinterwordspacing

\bibitem{fast_multi_stage}
K.~Wei, R.~Iyer, and J.~Bilmes, ``Fast multi-stage submodular maximization,''
  in \emph{Proceedings of the 31st International Conference on International
  Conference on Machine Learning - Volume 32}, ser. ICML'14.\hskip 1em plus
  0.5em minus 0.4em\relax JMLR.org, 2014, p. II–1494–II–1502.

\bibitem{LazyGreedy}
A.~Krause, A.~Singh, and C.~Guestrin, ``Near-optimal sensor placements in
  gaussian processes: Theory, efficient algorithms and empirical studies,''
  \emph{J. Mach. Learn. Res.}, vol.~9, p. 235–284, Jun. 2008.

\bibitem{NYCDATA}
\BIBentryALTinterwordspacing
{New York Taxi and Limousine Commission}. (2020) Tlc trip record data.
  [Online]. Available:
  \url{https://www1.nyc.gov/site/tlc/about/tlc-trip-record-data.page}
\BIBentrySTDinterwordspacing

\end{thebibliography}
\end{document}